\documentclass{article}
\usepackage{arxiv}

\usepackage[utf8]{inputenc}
\usepackage[T1]{fontenc}
\usepackage{natbib}          
\usepackage{graphicx}
\usepackage{amsmath,amssymb,amsthm}
\usepackage{bm}
\usepackage{booktabs}
\usepackage{xcolor}
\usepackage{float}
\usepackage{url}
\usepackage{hyperref}        

\newtheorem{theorem}{Theorem}[section]
\newtheorem{proposition}[theorem]{Proposition}
\newtheorem{lemma}[theorem]{Lemma}

\theoremstyle{definition}
\newtheorem{assumption}{Assumption}[section]
\newtheorem{definition}[theorem]{Definition}
\theoremstyle{remark}
\newtheorem*{remark}{Remark}

\title{Learnable Sequential Memory in Coupled Oscillator Networks}

\author{
  Taosha Guo \\
  Electrical Engineering and Computer Science\\
  University of California, Irvine\\
  \texttt{taoshag@uci.edu} \\
  \And
  Fabio Pasqualetti \\
  Electrical Engineering and Computer Science\\
  University of California, Irvine\\
  \texttt{fabiopas@uci.edu}
}
\date{}

\begin{document}
\maketitle 
\begin{abstract}
The Hopfield network established that static memories can be stored as
energy minima of a recurrent dynamical system, yet real intelligent
agents must navigate \emph{sequences} of memories rather than isolated
snapshots. Biological cortex addresses this through a separation of
timescales: fast synaptic dynamics encode individual states while slow
neuromodulatory processes govern transitions between them. Inspired by
this multi-timescale organization, we propose a sequential memory
architecture that is fully continuous, admits exponential storage
capacity, and is learnable in the sense that the transition structure is
carried by a separate routing matrix---decoupled from the stored
patterns, driven by the input context, and free to be chosen,
optimized, or learned from data rather than hard-wired into the
memory substrate. Specifically, we construct an autonomous three-timescale dynamical
system with three coupled layers: a fast Kuramoto layer that stores
phase patterns as exponentially stable phase-locked configurations, an intermediate
hysteresis layer that enforces reliable dwell times, and a slow
attention layer that routes  sequential transitions. We provide a complete theoretical analysis of the stability and
robustness of each layer, and we validate the full system through
numerical simulations of sequential memory retrieval.
\end{abstract}

\section{Introduction}

Storing and retrieving \emph{static} memories in a dynamical system is
by now well understood: a pattern is encoded as a stable phase-locked
configuration, and recall amounts to relaxing to the nearest fixed
point from a cued initial condition. In most practical tasks, however,
what an agent must remember is not an isolated snapshot but a
\emph{sequence} of states or actions: trajectories to follow, motor
programs to execute, plans to unfold in time. \emph{Sequential} memory
is therefore the more realistic problem, and the relevant
computational object is not a single attractor but an ordered
itinerary among many.

Recent work has addressed sequential memory directly, typically by
engineering a heteroclinic cycle, a chain of saddles, or a latching
dynamics that visits stored patterns in a prescribed order. What these
constructions share is that the order itself is fixed by the coupling:
it is encoded in an asymmetric weight matrix or in the topology of the
heteroclinic skeleton, so the system can replay one sequence but
cannot be reprogrammed to follow a different one without altering the
substrate that stores the patterns themselves.

A natural way out of this entanglement is to stop storing the
\emph{order} in the same place as the \emph{patterns}. Computational
neuroscience points this way: memory systems separate a fast retrieval
substrate from slowly evolving control variables. Murray and Escola
\citeyearpar{Murray2020} find a fast cortical pathway for rapid
motor-skill acquisition and a slower subcortical pathway, through
striatum and thalamus, that consolidates practiced behavior; the two
learn in parallel and trade control over time. Time-cell and ramping
recordings in hippocampus and prefrontal cortex
\citep{Eichenbaum2014, MacDonald2011} likewise suggest that slowly
varying signals index where the animal is in a sequence. The same
split reappears in associative memory, as two-timescale
extensions of the Hopfield network. The Exponential Dynamic Energy
Network of Karuvally et al.\ \citeyearpar{Karuvally2025} couples a fast
feature population to a slow modulator, giving stable transitions with
closed-form escape times and exponential capacity. Betteti et al.\
\citeyearpar{Betteti2026} pair fast retrieval with a slow reasoning
variable that destabilizes the current attractor, with explicit
thresholds for self-sustained transitions. Asymmetric Modern Hopfield
networks \citep{Chaudhry2023} reach exponential sequence capacity
through higher-order, temporally asymmetric coupling. Across these
substrates the commitment is the same: a fast layer holds patterns, a
slow layer drives transitions, and the slow layer reshapes the fast
layer's energy landscape rather than acting on it directly. In each,
however, the order lives in the slow layer's fixed asymmetric
coupling. The timescale split buys flexible \emph{timing}, not a freely
specifiable \emph{order}: reprogramming the itinerary still means
rewiring the weights. What is missing is a mechanism that stores the
patterns and routes their order through \emph{separate}, independently
reprogrammable channels --- which is what we provide.

\subsection*{Main contributions}

In this manuscript we build a smooth, autonomous,
three-timescale dynamical system that separates \emph{what} is stored
from \emph{in what order} it is visited. A fast Kuramoto layer stores a
catalog of continuous patterns as exponentially stable phase-locked
configurations, each given by an exact closed-form weight vector. A
slow attention variable governs transitions, steering the trajectory to
visit the stored patterns in an order set by a separate, reprogrammable
routing matrix. An intermediate layer enforces reliable dwell times
between transitions. We give a complete theoretical analysis of each
layer --- stability, transition timing, and robustness to noise ---
together with numerical experiments confirming every prediction
(\S\S\ref{sec:setting}--\ref{sec:experiments}).

What distinguishes this architecture from existing sequential-memory
models is the following. First, the existing sequential memory architectures hard wire the
traversal order into the coupling weights, so replaying a different
sequence means rewriting the memory substrate. Here in our model the order is a
separate, interchangeable program: swapping it redirects the trajectory
through the same stored patterns without altering a single weight, and
being a free parameter it can be chosen, optimized, or learned from
data. Second, whereas Hopfield-type models confine memories to binary
configurations, our Kuramoto substrate stores any continuous pattern in
the stable region, substantially expanding the representational
vocabulary of the network. Third, and most distinctively, the order is
\emph{not} fixed to one sequence: an external context signal can select
among a library of transition rules, so the same stored catalog is
traversed in entirely different orders depending on behavioral state
--- with no weight change and no reset of the network. This decouples
the \emph{what} (the stored catalog), the \emph{how} (the oscillator
dynamics), and the \emph{order} (the routing program, switchable at
runtime) into independent degrees of freedom --- a capability absent
from prior fixed-sequence architectures.

\subsection*{Related work}

The study of sequential memory in neural networks begins with the
observation that standard associative-memory models, from classical
Hopfield networks \citep{Sompolinsky1986, Kleinfeld1986} to their
modern high-capacity successors \citep{Krotov2016, Demircigil2017,
Ramsauer2021}, are fundamentally static: they store patterns as
isolated energy minima and have no natural mechanism for moving
between them in a prescribed order. Early attempts to introduce
sequence extended these models with temporally asymmetric Hebbian
rules, hard-wiring a traversal order into the coupling weights at the
cost of sublinear capacity and an inability to reprogram the sequence
without rewriting the weights. More recent constructions achieve
exponential sequence capacity through asymmetric higher-order
interactions \citep{Chaudhry2023}, but the same fundamental limitation
remains: the route is inseparable from the storage substrate.

A parallel line of work, motivated by dynamical-systems theory rather
than energy landscapes, builds sequence generators out of
heteroclinic cycles and winnerless-competition networks
\citep{Rabinovich2008, Russo2012, Recanatesi2015}. In these models a
trajectory is guided from one saddle point to the next by carefully
engineered unstable manifolds, and the rigorous stability theory
developed by Krupa--Melbourne \citeyearpar{Krupa1995}, Field
\citeyearpar{Field1996}, and Ashwin--Postlethwaite \citeyearpar{Ashwin2013}
provides precise conditions under which such cycles persist. Our
analysis draws on this mathematical toolkit, but we do not engineer a
heteroclinic skeleton by hand: instead, the cycle structure emerges
automatically from the learnable routing matrix $K$, which
separates the question of \emph{what} is stored from the question of
\emph{in what order} it is visited.

The choice of Kuramoto oscillators as the storage substrate connects
our work to a third body of literature. Phase-locking theory for
Kuramoto networks on graphs is developed systematically by
D\"{o}rfler and Bullo \citeyearpar{Dorfler2014}, and the splay-state
stability results that underpin our memory-as-phase-locked configuration
construction \eqref{eq:winverse} follow from this framework. To our
knowledge, the inverse-design step --- deriving coupling weights that
make a prescribed phase pattern exponentially stable --- has not
previously been applied to programmable sequential memory.

Finally, and most directly relevant to our architecture, a growing
body of work in both neuroscience and computational modeling argues
that memory and cognition are organized across multiple timescales
rather than a single one. Experimental evidence from motor learning
identifies fast cortical and slow subcortical pathways operating in
parallel \citep{Murray2020}, while time-cell and ramping-activity
recordings in hippocampus and prefrontal cortex suggest that episodic
memory is indexed by slowly varying signals that track position within
a behavioral sequence \citep{Eichenbaum2014, MacDonald2011}. This
biological intuition has recently been formalized in two-timescale
Hopfield variants: the dynamic energy networks of Karuvally et al.\
\citeyearpar{Karuvally2025} and the input-driven plasticity networks of
Betteti et al.\ \citeyearpar{Betteti2026} both couple a fast retrieval layer
to a slow modulatory variable that progressively destabilizes the
current attractor, yielding closed-form transition times and
provably stable sequence retrieval. Our three-timescale system
extends this program by adding a third, intermediate layer that
regulates dwell times, replacing the Hopfield energy substrate with a
Kuramoto phase network, and making the routing rule an explicit,
interchangeable input rather than a consequence of the weights.

\section{Encoding Memory Sequences through Phase Synchronization}\label{sec:setting}

This section introduces the dynamical system on which our memory architecture is built: an autonomous Kuramoto network that stores each pattern of a prescribed catalog $\Phi=[\phi^{(1)},\dots,\phi^{(M)}]$ as its unique exponentially stable phase-locked configuration. Storage works by switching the coupling weights: each pattern $\phi^{(j)}$ comes with its own weight vector $w^{(j)}$, and installing $w^{(j)}$ reshapes the energy landscape of the network so that $\phi^{(j)}$ becomes its unique stable phase-locked configuration---the network, started from any admissible state, relaxes onto $\phi^{(j)}$ and onto no other pattern. The mechanism that visits the stored patterns in a learnable order is deferred to Section~\ref{sec:architecture}. Here we introduce the oscillator model and derive the weight assignment behind this storage scheme (Theorem~\ref{thm:rank1}).

We consider a network of $n$ phase oscillators with identical natural frequencies, set to zero by passing to a co-rotating frame \citep{Ogranovich2026}:
\begin{equation}\label{eq:kuramoto_identical}
  \dot{\theta}_i = \sum_{j=1}^n w_{ij} \sin(\theta_j - \theta_i), \quad i = 1, \ldots, n,
\end{equation}
where $\theta_i \in \mathbb{S}^1$ is the phase of oscillator $i$ and $w_{ij}=w_{ji} \geq 0$ is the coupling on edge $\{i,j\}$ of a graph $\mathcal{G}$ with $N$ edges. As recalled in Appendix~\ref{app:proof-rank1}, system \eqref{eq:kuramoto_identical} is a gradient flow for the energy $E(\theta)=-\sum_{e}w_e\cos\delta_e$, where $\delta_e=\theta_j-\theta_i$ is the phase difference across edge $e=\{i,j\}$; every trajectory descends $E$ and converges to an equilibrium.

This gradient structure suggests the design principle: we assign a separate weight vector $w^{(j)}=(w^{(j)}_e)_{e=1}^N$ to each catalog pattern, chosen so that under $w^{(j)}$ the landscape $E$ has $\phi^{(j)}$ as its only stable phase-locked configuration, and the network flows downhill to it. Switching the active weights to $w^{(j')}$ reshapes the landscape so that $\phi^{(j')}$ becomes the minimum, carrying the network to the new pattern. Retrieving the stored patterns in any order thus reduces to switching weights, each switch resculpting the landscape to move the system one step along the sequence.

To find a closed-form solution for the weights that make the corresponding pattern the unique stable equilibrium, we choose the coupling graph to be a disjoint union of $m$ cycles of length $n_c$, shown in Figure~\ref{fig:honeycomb}(a):
\[
  \mathcal{G}_m^{n_c}=\bigsqcup_{a=1}^{m}C_a, \qquad n=N=mn_c.
\]
Let $\delta_{e,a}=\theta_{e+1,a}-\theta_{e,a}$ be the phase difference on edge $e\in\{1,\dots,n_c\}$ of cycle $a$ (node indices modulo $n_c$). Because the edges of $C_a$ close into a loop, the differences are not independent: any continuous choice of representatives accumulates an integer number of full turns around the cycle,
\begin{align}\label{eq:cycle-constraint}
  \sum_{e=1}^{n_c}\delta_{e,a} = 2\pi k_a, \qquad k_a\in\mathbb{Z}, \quad a=1,\dots,m,
\end{align}
where $k_a$ is the \emph{winding number} of cycle $a$. 

Constraint \eqref{eq:cycle-constraint} implies each cycle contains $n_c-1$ independent phase differences giving a fixed winding number $k_a$. Accordingly, each catalog pattern $\phi^{(j)}$, $j\in\{1,\dots,M\}$, stores $n_c-1$ free entries $\phi^{(j)}_{1,a},\dots,\phi^{(j)}_{n_c-1,a}$ per cycle, and the closing edge is determined by the winding,
\begin{equation}\label{eq:closing-edge}
  \phi^{(j)}_{n_c,a} \;:=\; 2\pi k_a-\sum_{e=1}^{n_c-1}\phi^{(j)}_{e,a},
\end{equation}
so that the full edge vector of every pattern satisfies \eqref{eq:cycle-constraint}. Since the cycles share no nodes, the per-cycle blocks are independent, and the substrate stores patterns of dimension $m(n_c-1)$.

We now give a closed-form weight assignment under which a given catalog pattern is the unique stable equilibrium of \eqref{eq:kuramoto_identical} on $\mathcal{G}_m^{n_c}$, subject to the following admissibility conditions.

\begin{assumption}[Admissible catalog: sign-uniform bounded phases and shared windings]\label{assumption:sign-uniform}
\leavevmode
\begin{enumerate}
  \item[(i)] \emph{Sign-uniform bounded phase differences.} For each cycle $a\in\{1,\dots,m\}$ there is a common sign $s_a\in\{+1,-1\}$ such that, for every catalog index $j\in\{1,\dots,M\}$ and edge $e\in\{1,\dots,n_c\}$ (including the closing entry \eqref{eq:closing-edge}),
  \[
    \phi^{(j)}_{e,a}\in\left(-\tfrac{\pi}{2},\tfrac{\pi}{2}\right)\setminus\{0\}
    \qquad\text{and}\qquad
    \operatorname{sign}\!\bigl(\phi^{(j)}_{e,a}\bigr)=s_a;
  \]
  that is, on each cycle all entries share the sign $s_a$ and have magnitude strictly below $\tfrac{\pi}{2}$.
  \item[(ii)] \emph{Shared winding numbers.} The winding numbers are common to the whole catalog: each cycle $a$ has a single $k_a\in\mathbb{Z}$ such that \eqref{eq:closing-edge} holds with the same $k_a$ for every $j\in\{1,\dots,M\}$. Because the entries of cycle $a$ are nonzero and share the sign $s_a$, their sum $2\pi k_a$ is nonzero with the same sign, so $k_a\ne0$ and $s_a=\operatorname{sign}(k_a)$.
\end{enumerate}
\end{assumption}

The two conditions play complementary roles: (i) makes each stored pattern a stable equilibrium of the inverse-weight construction---the shared per-cycle sign forces the inverse weights to balance into a genuine equilibrium, while the bound $|\phi^{(j)}_{e,a}|<\tfrac\pi2$ places that equilibrium in the stable cube---whereas (ii) ensures \emph{mutual retrievability}, allowing the network to switch between patterns without being trapped by spurious equilibria in other winding sectors. Both properties follow from the construction in Theorem~\ref{thm:rank1}.

\begin{theorem}[Weights for a unique stable pattern]\label{thm:rank1}
Let $\phi^{(j)}$ be a target pattern on the substrate $\mathcal{G}_m^{n_c}$ satisfying Assumption~\ref{assumption:sign-uniform}, with winding numbers $k_a\ne0$. Then assigning to each edge $e$ of cycle $a$ the individual weight
\begin{equation}\label{eq:winverse}
  w^{(j)}_{e,a}=\frac{1}{\bigl|\sin\phi^{(j)}_{e,a}\bigr|}
\end{equation}
makes $\phi^{(j)}$ the \emph{unique} exponentially stable phase-locked configuration of \eqref{eq:kuramoto_identical} in the stable cube $\delta\in(-\tfrac\pi2,\tfrac\pi2)^N$ with winding numbers $(k_1,\dots,k_m)$.
\end{theorem}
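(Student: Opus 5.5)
Since the cycles of $\mathcal{G}_m^{n_c}$ share no nodes, system \eqref{eq:kuramoto_identical} decouples into $m$ independent subsystems, one per cycle. The Jacobian is block diagonal, and phase-locked configurations are products of per-cycle ones. It therefore suffices to fix a single cycle $C_a$, with edge differences $\delta_e=\delta_{e,a}$, weights $w_e=1/|\sin\phi_e|$ and sign $s=s_a$, and to prove three things: $\phi$ is an equilibrium; it is exponentially stable modulo the rotational symmetry; and it is the only equilibrium in the cube $(-\tfrac\pi2,\tfrac\pi2)^{n_c}$ whose winding number is $k_a$. I will argue throughout in edge coordinates.

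\emph{Equilibrium.} On a cycle, node $i$ has the two incident edges $e=i$ (toward $i+1$) and $e=i-1$, so $\dot\theta_i = w_i\sin\delta_i - w_{i-1}\sin\delta_{i-1}$. Hence $\theta$ is phase-locked if and only if the edge flow $f_e:=w_e\sin\delta_e$ is the same for every edge, i.e.\ $f_e\equiv c$. With the inverse weights, $f_e(\phi)=\sin\phi_e/|\sin\phi_e|=s$ for all $e$, because every entry shares the sign $s_a$ by Assumption~\ref{assumption:sign-uniform}(i). So $\phi$ is an equilibrium carrying constant flow $c=s$. This is exactly where sign-uniformity is used: without it the flows would be $\pm1$ and would not balance.

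\emph{Stability.} Linearizing at $\phi$ gives $\dot\theta=-L\theta$, where $L$ is the weighted cycle Laplacian with edge weights $w_e\cos\phi_e$. These weights are strictly positive because $|\phi_e|<\tfrac\pi2$, and the cycle is connected. Hence $L$ is positive semidefinite with a simple zero eigenvalue along $\mathbf 1$, which corresponds to the rotational symmetry. On the quotient (equivalently, in the $n_c-1$ independent edge coordinates) the linearization is exponentially stable, with rate $\lambda_2(L)>0$. Alternatively, one can note that the Hessian of $E$ at $\phi$ is $L$.

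\emph{Uniqueness.} This is the main step. Any equilibrium $\delta$ in the cube with winding $k_a$ satisfies $w_e\sin\delta_e=c$ for all $e$, for some common $c$. Since $|c|/w_e\le 1$ is required, $c$ ranges over $|c|\le \min_e w_e$. The map $\sin:(-\tfrac\pi2,\tfrac\pi2)\to(-1,1)$ is a bijection, so $\delta_e=\arcsin(c/w_e)=\arcsin(c|\sin\phi_e|)$. Define
\[
g(c)=\sum_{e=1}^{n_c}\arcsin\bigl(c\,|\sin\phi_e|\bigr).
\]
Each summand is strictly increasing in $c$, so $g$ is strictly increasing and can take the value $2\pi k_a$ at most once. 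At $c=s$ we have $\arcsin(s|\sin\phi_e|)=\phi_e$, so $g(s)=\sum_e\phi_e=2\pi k_a$. Therefore $c=s$ is the unique solution, and $\delta=\phi$.

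One detail needs checking: $c=s$ satisfies $|c|\le 1\le\min_e w_e$, since $w_e=1/|\sin\phi_e|\ge 1$. The endpoint case $|c|=w_e$ would force $|\delta_e|=\tfrac\pi2$, which lies outside the open cube. The winding constraint \eqref{eq:cycle-constraint} is what singles out this solution: in a different winding sector the same monotonicity argument gives a different $c$, or no solution at all.

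Finally, taking the product over all cycles gives uniqueness and exponential stability for the full configuration with winding vector $(k_1,\dots,k_m)$.
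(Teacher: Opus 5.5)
Your proof is correct and follows essentially the same route as the paper. Both decouple into single cycles, observe that the inverse weights give the constant edge flow $c=s_a$, obtain stability from the positive-weight cosine Laplacian (the stable-cube lemma), and get uniqueness from the strict monotonicity of $\sum_e \arcsin\bigl(c\,|\sin\phi_e|\bigr)$ on $|c|<1/\max_e|\sin\phi_e|$, together with the check that this sum equals $2\pi k_a$ at $c=s_a$.
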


The proof is given in Appendix~\ref{app:proof-rank1}: rewriting \eqref{eq:kuramoto_identical} as a gradient flow in edge coordinates, existence follows from a stability lemma for equilibria in the stable cube (Lemma~\ref{lem:stable-cube}), and uniqueness from the strict monotonicity of the winding sum within each cycle.
Figure~\ref{fig:honeycomb} shows exactly this construction on $\mathcal{G}_4^6$ ($k_a=1$ on every cycle, $n_c=6>4$): each cycle stores $n_c-1=5$ free phases of the pattern,
\[
\Phi=\begin{aligned}[t]
&(30^\circ, 89^\circ, 51^\circ, 80^\circ, 45^\circ)\ (C_1),\quad
(75^\circ, 40^\circ, 85^\circ, 55^\circ, 70^\circ)\ (C_2),\\
&(60^\circ, 85^\circ, 36^\circ, 87^\circ, 47^\circ)\ (C_3),\quad
(50^\circ, 70^\circ, 88^\circ, 42^\circ, 60^\circ)\ (C_4),
\end{aligned}
\]
and the closing entry of each cycle is fixed by the winding via \eqref{eq:closing-edge}, giving $\phi_{6,a}=65^\circ$, $35^\circ$, $45^\circ$, and $50^\circ$ on $C_1,\dots,C_4$, respectively. All entries lie in $(0^\circ,90^\circ)$ and $k_a=1$, so Theorem~\ref{thm:rank1} applies, and the weights \eqref{eq:winverse} make $\Phi$ the unique stable configuration in its winding sector.

\begin{figure}[H]
    \centering
    \includegraphics[width=0.97\linewidth]{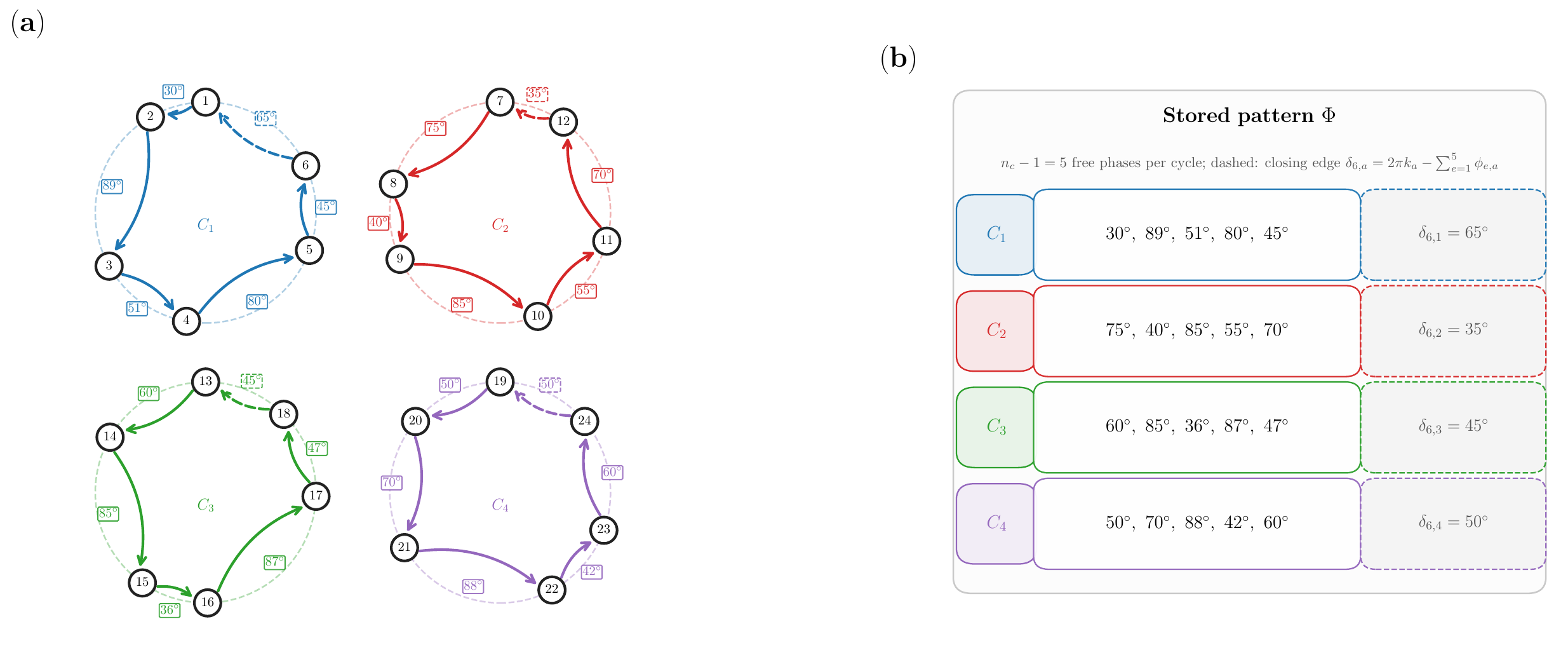}
    \caption{\textbf{(a)} The storage graph $\mathcal{G}_4^6$: a disjoint union of four hexagonal cycles ($n_c=6$), each storing one block of the target phase-difference pattern $\Phi$ as edge labels. Edges are colored by cycle: $C_1$ (blue), $C_2$ (red), $C_3$ (green), $C_4$ (purple). The cycles share no nodes or edges, so each carries an independent winding number $k_a$.
    \textbf{(b)} The stored pattern $\Phi$ grouped by cycle, showing each cycle's phase-difference vector. Each row corresponds to one cycle and is color-matched to the graph on the left. Five of each cycle's six edge phases are free pattern variables; the sixth (closing) edge is fixed by the winding, so all six phases sum to $2\pi$ (winding $k_a=1$). The resulting configuration is an exponentially stable phase-locked configuration of the Kuramoto dynamics.}
    \label{fig:honeycomb}
\end{figure} 
In the next section, we build upon this autonomous storage mechanism to present our main construction: a dynamical system capable of learnable sequential memory retrieval. 
\section{Main results: the three-timescale system}\label{sec:architecture}

The main result of this paper is the construction of a smooth, autonomous, finite-dimensional dynamical system that stores the catalog $\Phi=[\phi^{(1)},\dots,\phi^{(M)}]$ as exponentially stable phase-locked configurations of the substrate $\mathcal{G}_m^{n_c}$ and visits them in any order specified by a learnable routing matrix $K$. Recall from \S\ref{sec:setting} that every catalog pattern lies in
\begin{align}{\label{eq:C1}}
\phi^{(j)}\in\bigl((-\pi/2,\pi/2)\setminus\{0\}\bigr)^N\quad\text{with}\quad\sum_{e=1}^{n_c}\phi^{(j)}_{e,a}=2\pi k_a \quad\text{for each cycle}\quad a=1,\dots,m,
\end{align}
the windings $k_a$ being shared across the catalog (Assumption~\ref{assumption:sign-uniform}), and that Theorem~\ref{thm:rank1} supplies for each pattern $\phi^{(k)}$ a weight vector $w^{(k)}$ under which that pattern is the unique stable configuration in its winding sector.

The system has three coupled state variables, each evolving on its own timescale: the oscillator phases $\theta \in \mathbb{T}^N$ of the Kuramoto network, in which the memories live as phase-locked configurations, with edge phase differences $\delta(\theta)\in\mathbb{R}^N$ as in \S\ref{sec:setting}; a recognition vector $r \in \Delta^M$ that maintains a lagged estimate of which catalog pattern the network has been visiting; and an attention vector $\alpha \in \Delta^M$ on the memory simplex that selects which catalog weights are currently active. Their joint dynamics are
\begin{equation}\label{eq:joint}
\begin{aligned}
\dot\theta_i &= \sum_{j\in\mathcal{N}(i)} w(\alpha)_{\{i,j\}}\sin(\theta_j-\theta_i),\\[2pt]
\tau_r\,\dot r &= \widetilde q(\theta) - r,\\[2pt]
\dot\alpha_k &= \tau^{-1}\,g(V(\theta))\,\alpha_k
\bigl(\beta(K^\top r)_k - \beta\langle K^\top r,\alpha\rangle\bigr),
\end{aligned}
\end{equation}
built from four ingredients:
\begin{align*}
w(\alpha) &= \textstyle\sum_{k=1}^{M}\alpha_k\,w^{(k)}
   && \text{(blended coupling weights, $w^{(k)}$ from Theorem~\ref{thm:rank1}),}\\
\widetilde q_j(\theta) &=
   \frac{\exp\!\bigl(-\gamma\|\phi^{(j)}-\delta(\theta)\|^2\bigr)}
        {\sum_{k=1}^{M} \exp\!\bigl(-\gamma\|\phi^{(k)}-\delta(\theta)\|^2\bigr)}
   && \text{(pattern recognition, sharpness $\gamma>0$),}\\
V(\theta) &= N\,\mathrm{Var}(\dot\theta),\quad
   g(V)=\exp(-V/\delta_V)
   && \text{(convergence gate),}\\
K &\in\mathbb{R}^{M\times M}\ \text{column-stochastic}
   && \text{(routing matrix, learnable).}
\end{align*}

The system \eqref{eq:joint} realizes the desired behavior by assigning
each state variable its own timescale, with $\tau_\theta\ll\tau_r
\lesssim\tau$: the fast phases $\theta$ realize the stored patterns as
phase-locked configurations, the intermediate recognition $r$ provides a dwell
window through its deliberate lag, and the slow attention $\alpha$ selects the
next pattern through the routing matrix $K$. We describe each layer in turn.

On the fast timescale ($\tau_\theta\sim 1$), the attention $\alpha$ is
relatively frozen and the phases $\theta$ converge to the phase-locked
configuration of the current blended weights $w(\alpha)$. At a simplex vertex
$\alpha=e_k$ the blend collapses to the catalog weights, $w(e_k)=w^{(k)}$,
so by Theorem~\ref{thm:rank1} this configuration is exactly the $k$-th
pattern $\phi^{(k)}$. The vertices of $\Delta^M$ therefore index the stored
memories, and the fast layer realizes each one as soon as $\alpha$ settles at
the corresponding vertex.

On the intermediate timescale ($\tau_r$), the recognition vector $r$
integrates the softmax signal $\widetilde q(\theta)$ through the linear
contraction $\tau_r\dot r=\widetilde q(\theta)-r$ and so maintains a running,
smoothed estimate of which memory the fast layer has been visiting. This
deliberate lag is the dwell mechanism. When the fast layer arrives at the
next memory $\sigma(k)$ (where $k$ indexes the currently active memory and
$\sigma(k)$ its unique successor under the permutation encoded by $K$), the
instantaneous signal $\widetilde q$ snaps to $e_{\sigma(k)}$, but $r$ needs a
time $\sim\tau_r$ to catch up. During that window the routing target
$K^\top r$ still favors the previous memory $k$, which gives the attention
layer time to consolidate at $e_{\sigma(k)}$ before the routing rotates
forward.

On the slow timescale ($\tau$), the attention $\alpha$ evolves by replicator
dynamics with fitness $\beta K^\top r$, gated by the convergence indicator
$g$. The replicator preserves the simplex automatically, and the gate is
open ($g\approx 1$) only when the fast layer has reached a phase-locked
configuration ($V$ small); during transients the gate closes
($g\approx 0$), so the slow layer moves only when the fast layer has
settled.

The routing matrix $K$ prescribes the transitions between memories: given the current recognition state $r$, the product $K^\top r$ determines which memory the system should retrieve next. Because $K$ is decoupled from the physical substrate, it can be learned, updated, or entirely swapped to reprogram the transition itinerary without altering the weights of the underlying oscillators. Different choices of $K$ translate directly into different closed-loop sequences, matching the combinatorial graph structure of $K$ as formalized in \S\ref{sec:analysis}. Figure~\ref{fig:orbit} dissects one routing cycle of the full closed-loop dynamics \eqref{eq:joint} autonomously traversing four stored patterns in cyclic order.
\begin{figure}[!htbp]
\centering
\includegraphics[width=0.89\linewidth]{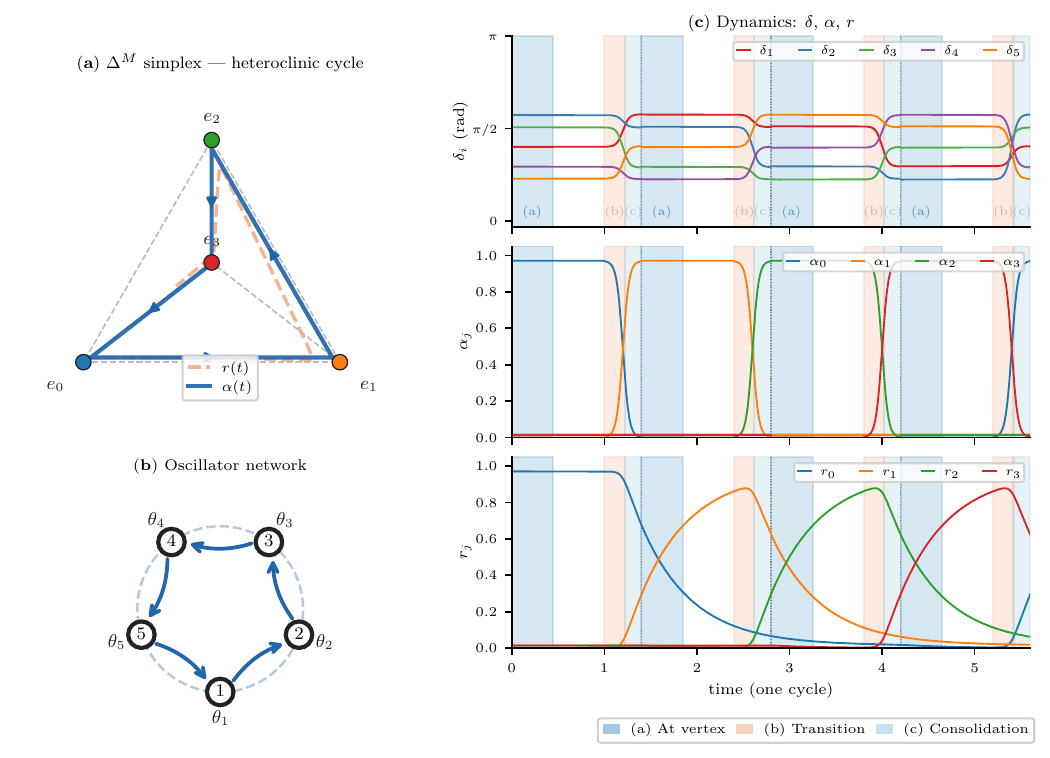}
\caption{\label{fig:orbit}Anatomy of one routing cycle of
\eqref{eq:joint}. \textbf{(a) At a vertex $\alpha=e_k$:} the fast layer
settles to $\phi^{(k)}$ exactly; recognition saturates,
$\widetilde q\to e_k$ for $\gamma$ large; the integrator $r\to e_k$ on
timescale $\tau_r$; and the routing $K^\top r$ peaks on the column
$K_{\cdot,k}$. Under a permutation $K$ this peak is unique at the next
prescribed memory, and the replicator drives $\alpha$ toward the
corresponding vertex. \textbf{(b) During a transition:} as $\alpha$
leaves $e_k$, $w(\alpha)$ blends and $\theta^*(\alpha)$ deviates from
$\phi^{(k)}$; $\delta(\theta)$ is far from every signature, so the
softmax $\widetilde q$ is approximately uniform; $r$ remains near
$e_k$, and $V$ rises on the transient, closing the gate $g$. Both
effects keep the replicator quiescent. \textbf{(c) Plateau
consolidation:} as $\alpha$ approaches the next vertex the fast layer
relocks and $g$ reopens, but for a window of length $\sim\tau_r$
the routing target $K^\top r$ still favors the previous memory, so the
replicator drives $\alpha$ toward the new vertex rather than
prematurely advancing; once $r$ catches up, the routing rotates forward
and the next transition begins. \textbf{(d) Closed-loop memory sequence:}
composition of vertex-to-next-vertex flows traces a memory sequence in
$(\theta,r,\alpha)$ whose combinatorial type is determined by $K$; for
$K$ a cyclic permutation this memory sequence is a heteroclinic cycle traversed
indefinitely.}
\end{figure}

\section{Stability and robustness analysis}\label{sec:analysis}
The joint system \eqref{eq:joint} couples three ODEs on separated timescales: the fast phases $\theta$ (rate $1/\tau_\theta$), the intermediate recognition $r$ (rate $1/\tau_r$), and the slow attention $\alpha$ (rate $1/\tau$). This separation lets us analyze the layers one at a time: for each layer we freeze its slower inputs, show that its steady state is exponentially stable, exhibit the basin within which perturbations are absorbed, and identify the parameters that set its response time. Composing the three layer-wise results then accounts for the closed-loop behavior---a recognition plateau followed by a routed transition---and yields an explicit formula for the dwell time at each memory (\S\ref{sec:combined}).

\subsection{Fast layer: the Kuramoto oscillator network}\label{sec:fast}

On the fast timescale the attention is effectively constant, so we fix $\alpha\in\Delta^M$ and study the phase equation of \eqref{eq:joint} as a Kuramoto network with blended weights $w(\alpha)=\sum_{k=1}^{M}\alpha_k w^{(k)}$. At a vertex $\alpha=e_k$ the blend reduces to $w^{(k)}$, and Theorem~\ref{thm:rank1} makes $\phi^{(k)}$ the unique exponentially stable configuration in its winding sector. For interior $\alpha$ the gradient-flow structure persists, and the equilibrium deforms continuously to a configuration $\theta^*(\alpha)$ with edge differences $\delta^*(\alpha)\in(-\pi/2,\pi/2)^N$. The next proposition extends the vertex result to all of $\Delta^M$: it exhibits an explicit forward-invariant basin around $\theta^*(\alpha)$ and an exponential contraction rate set by two geometric quantities, the algebraic connectivity of the cosine-weighted Laplacian and the safety margin separating $\delta^*(\alpha)$ from the boundary of the stable cube.
\begin{proposition}[Robustness: basin of $\theta^*(\alpha)$]\label{prop:fast-robust}
Fix $\alpha\in\Delta^M$ and let $\theta^*(\alpha)$ be a phase-locked
configuration of the fast layer of \eqref{eq:joint} (with $\alpha$
held fixed) whose edge differences satisfy
$\delta^*(\alpha)\in(-\pi/2,\pi/2)^N$. Define the safety margin
\[
\rho_\theta(\alpha)\;:=\;\tfrac{\pi}{2}-\max_e\,|\delta^*_e(\alpha)|\;>\;0,
\]
the distance from $\delta^*(\alpha)$ to the boundary of the stable cube, and for $0<\rho<\rho_\theta(\alpha)$ the $\ell^\infty$-ball
\[
\mathcal{B}_\rho(\alpha)\;:=\;\bigl\{\theta\in\mathbb{T}^N:\|\delta(\theta)-\delta^*(\alpha)\|_\infty\le\rho\bigr\}\;\subset\;\bigl\{\theta:\delta(\theta)\in(-\pi/2,\pi/2)^N\bigr\}.
\]
Then:
\begin{enumerate}\itemsep0pt
\item[\emph{(i)}] $\mathcal{B}_\rho(\alpha)$ is forward-invariant under the fast layer with $\alpha$ frozen;
\item[\emph{(ii)}] every trajectory $\theta(t)$ starting in $\mathcal{B}_\rho(\alpha)$ converges to $\theta^*(\alpha)$ at exponential rate $\kappa_\rho(\alpha)/\tau_\theta$, where
\[
\kappa_\rho(\alpha)\;:=\;\cos\!\bigl(\max_e|\delta^*_e(\alpha)|+\rho\bigr)\,\lambda_2\!\bigl(L(\alpha)\bigr)\;>\;0
\]
and $L(\alpha)$ is the weighted graph Laplacian with edge weights $w(\alpha)_e\cos\delta^*_e(\alpha)$, whose smallest non-trivial eigenvalue is $\lambda_2(L(\alpha))$.
\end{enumerate}
In particular, at a memory vertex $\alpha=e_k$, $\delta^*(e_k)=\phi^{(k)}$ and the margin reduces to
$\rho_\theta(e_k)=\tfrac{\pi}{2}-\max_e|\phi^{(k)}_e|>0$ by \eqref{eq:C1}.
\end{proposition}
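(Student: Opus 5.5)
The plan is to pass to edge coordinates on each cycle and write the frozen-$\alpha$ fast layer as a linear time-varying Laplacian system in the deviation from $\theta^*(\alpha)$. The Laplacian's weights are controlled by the $\ell^\infty$ distance to $\delta^*(\alpha)$. Part (ii) then follows from a Laplacian Lyapunov estimate, and part (i) from a maximum principle. Because $\mathcal{G}_m^{n_c}$ is a disjoint union of cycles, everything decouples cycle by cycle, so I work on one cycle $C_a$ and drop the index $a$. I also absorb $\tau_\theta$ into time and restore it at the end.

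\emph{Step 1 (error dynamics).} Write $w_e=w(\alpha)_e>0$ and $u=\theta-\theta^*(\alpha)$, taking a continuous lift so that the winding number is preserved along trajectories. The equilibrium condition says $w_e\sin\delta^*_e$ is constant along the cycle. By the mean value theorem,
\[
w_e\bigl(\sin\delta_e-\sin\delta^*_e\bigr)=w_e\cos\xi_e(t)\,\bigl(\delta_e-\delta^*_e\bigr)
\]
for some $\xi_e(t)$ between $\delta_e(t)$ and $\delta^*_e$. Hence
\[
\dot u=-L_{\xi(t)}\,u ,
\]
where $L_{\xi(t)}$ is the cycle Laplacian with edge weights $\tilde w_e(t)=w_e\cos\xi_e(t)$. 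While $\theta(t)\in\mathcal{B}_\rho(\alpha)$ we have $|\xi_e|\le\max_e|\delta^*_e|+\rho<\pi/2$. This gives
\[
\tilde w_e\ \ge\ \cos\bigl(\max|\delta^*|+\rho\bigr)\,w_e\ \ge\ \cos\bigl(\max|\delta^*|+\rho\bigr)\,w_e\cos\delta^*_e ,
\]
and therefore $L_{\xi(t)}\succeq\cos(\max|\delta^*|+\rho)\,L(\alpha)$ in the Loewner order.

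\emph{Step 2 (rate, assuming invariance).} Project out the synchronous direction: $\mathbf 1^\top\dot u=0$, so the mean of $u$ is conserved and can be subtracted. Take $W=\tfrac12\|u-\bar u\mathbf 1\|^2$. Then
\[
\dot W=-(u-\bar u\mathbf 1)^\top L_{\xi(t)}(u-\bar u\mathbf 1)\le-2\kappa_\rho(\alpha)\,W
\]
by the Courant--Fischer characterization of $\lambda_2$. So $u$ converges to a constant vector exponentially at rate $\kappa_\rho(\alpha)$, which means $\delta(\theta(t))\to\delta^*(\alpha)$, i.e.\ $\theta\to\theta^*(\alpha)$ modulo rotation, at rate $\kappa_\rho(\alpha)/\tau_\theta$. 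Positivity of $\lambda_2(L(\alpha))$ uses connectivity of each cycle together with $\cos\delta^*_e>0$. The vertex claim is immediate from Theorem~\ref{thm:rank1}, which gives $\delta^*(e_k)=\phi^{(k)}$, and \eqref{eq:C1}.

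\emph{Step 3 (forward invariance; the main obstacle).} The $\ell^2$ decay of Step 2 does not by itself keep $\|\delta-\delta^*\|_\infty\le\rho$, and a Euclidean decay does not control the sup norm without losing constants. I would first try a maximum principle in the "flow" variable $y_e=w_e\sin\delta_e-c$, where $c$ is the common equilibrium value. Differentiating gives
\[
\dot y_e=w_e\cos\delta_e\,\bigl(y_{e+1}-2y_e+y_{e-1}\bigr),
\]
a heat equation on the cycle with positive coefficients inside the stable cube. Hence $\max_e y_e$ is nonincreasing and $\min_e y_e$ is nondecreasing. Since $\delta_e\mapsto w_e\sin\delta_e$ is monotone on $(-\pi/2,\pi/2)$, this yields invariance of the box $\{\,|y_e|\le\eta\,\}$, which is sandwiched between $\ell^\infty$ balls in $\delta$.

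The delicate point is matching this box exactly with $\mathcal{B}_\rho(\alpha)$. The natural invariant sets are the $y$-boxes, and the sum constraint $\sum_e x_e=0$ (fixed winding) must be used to close the comparison. If an exact match fails, I would either show that the edge attaining $\max_e|x_e|$ satisfies $x_e\dot x_e\le0$, using the constraint and the sign structure of the discrete Laplacian, or state (i) for the slightly shrunk $y$-box contained in $\mathcal{B}_\rho(\alpha)$. Step 2 only needs the trajectory to remain in a set where $|\xi_e|\le\max|\delta^*|+\rho$, so either version suffices for (ii).
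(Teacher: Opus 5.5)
\textbf{Verdict: genuine gap.} Step 3 does not establish (i), and it cannot, because (i) is false.

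Steps 1--2 are sound. They are a legitimate alternative to the paper's route: you linearize by the mean value theorem to a time-varying Laplacian and apply Courant--Fischer, where the paper uses strong convexity of the energy on $\mathcal{B}_\rho(\alpha)$ and a Polyak--\L ojasiewicz inequality. Your cycle-by-cycle treatment also correctly reads $\lambda_2(L(\alpha))$ as the smallest \emph{nonzero} eigenvalue, which matters because on $\mathcal{G}_m^{n_c}$ with $m\ge 2$ the kernel is $m$-dimensional. Both rate arguments, however, assume the trajectory stays in $\mathcal{B}_\rho(\alpha)$, and Step 3 does not deliver that.

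Your first fallback, that the edge maximizing $|x_e|$ satisfies $x_e\dot x_e\le 0$, is false. Your own computation shows why. The maximum principle holds for $y_e=f_e-c$ with $f_e=w_e\sin\delta_e$, not for $x_e=\delta_e-\delta^*_e$. Since the slopes $w_e\cos\delta_e$ differ across edges, $x_e$ being maximal does not make $y_e$ maximal.

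Here is a counterexample at a memory vertex. Take one cycle with $\phi=(5^\circ,80^\circ,5^\circ,67.5^\circ,67.5^\circ,67.5^\circ,67.5^\circ)$, which is admissible with winding $1$. Then $w_e=1/\sin\phi_e$, $c=1$, $w_1=w_3\approx 11.47$, $w_2\approx 1.015$, and $\rho_\theta=10^\circ$. Take $\rho=9^\circ$ and $x(0)=(9,9,9,-6.75,-6.75,-6.75,-6.75)^\circ$, which has zero sum and $\|x(0)\|_\infty=\rho$. Then $f_1(0)=f_3(0)=\sin 14^\circ/\sin 5^\circ\approx 2.78$, so
\[
\dot\delta_2=f_1+f_3-2f_2\ \ge\ f_1+f_3-2w_2\ \approx\ 3.5>0
\]
while $x_2=\rho$. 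The trajectory leaves $\mathcal{B}_\rho(\alpha)$ immediately.

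It gets worse. On a short window $f_2,f_7>0$, so $\dot\delta_1\ge-2f_1\ge-2w_1\delta_1$, and likewise for $\delta_3$. This keeps $f_1,f_3\ge 2.2$ for $0\le t\le 0.01$. Over that window $\delta_2$ gains more than the remaining $1^\circ$ and crosses $\pi/2$, so the trajectory exits the stable cube. There the cosine bound of your Step 1, and with it the rate $\kappa_\rho$, is lost.

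Your second fallback is the right repair, but it proves a weaker statement, and ``either version suffices for (ii)'' is incorrect. Claim (ii) quantifies over \emph{every} initial condition in $\mathcal{B}_\rho(\alpha)$. An invariant box $Y_\eta=\{|y_e|\le\eta\ \forall e\}\subset\mathcal{B}_\rho(\alpha)$ covers only trajectories that start in $Y_\eta$. Conversely, a $y$-box containing $\mathcal{B}_\rho(\alpha)$ need not lie in the stable cube. Your max principle needs the box inside the cube, i.e.\ $|c|+\eta<\min_e w_e$, which in the example means $\eta<w_2-1\approx 0.015$. But any $y$-box containing $\mathcal{B}_\rho(\alpha)$ there needs $\eta\ge 1.77$.

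What your argument honestly proves is the following. For $|c|+\eta<\min_e w_e$, the box $Y_\eta$ is forward-invariant. Choose such an $\eta$ with $Y_\eta\subset\mathcal{B}_\rho(\alpha)$. Then every trajectory starting in $Y_\eta$ stays in $\mathcal{B}_\rho(\alpha)$ and converges at rate $\kappa_\rho(\alpha)$. In particular this holds from a smaller ball $\mathcal{B}_{\rho'}(\alpha)\subset Y_\eta$, where the ratio $\rho'/\rho$ is governed by the spread of the slopes $w_e\cos\delta^*_e$.

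The paper's proof will not close the gap either. Its argument for (i) shows only that the sublevel set $\{E\le c^*\}\subset\mathcal{B}_\rho(\alpha)$ is forward-invariant. It then handles only initial conditions with $E(\theta(0))\le c^*$, so it has exactly the same hole.
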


\begin{proof}
\emph{Gradient-flow structure.} With $\alpha$ held fixed, the fast layer of \eqref{eq:joint} is the negative gradient flow
\[
\tau_\theta\,\dot\theta\;=\;-\nabla H(\theta;\alpha),\qquad
H(\theta;\alpha)\;:=\;-\!\!\sum_{e=\{i,j\}\in\mathcal{E}}\!w(\alpha)_e\cos(\theta_j-\theta_i),
\]
whose Hessian at any configuration is the cosine-weighted Laplacian $L(\theta;\alpha)$ with edge weights $w(\alpha)_e\cos\delta_e(\theta)$. On the open cube $\{\delta(\theta)\in(-\pi/2,\pi/2)^N\}$ every cosine factor is strictly positive, so $L(\theta;\alpha)\succeq 0$ with one-dimensional kernel spanned by the global rotation $\mathbf{1}$. Quotienting by this rotational symmetry, $H(\cdot;\alpha)$ is convex on the cube with strict minimum at $\theta^*(\alpha)$.

\emph{Lyapunov function.} Define the relative energy
\[
E(\theta)\;:=\;H(\theta;\alpha)-H(\theta^*(\alpha);\alpha)\;\ge\;0,
\]
which vanishes only on the orbit $\theta^*(\alpha)+\mathbb{R}\mathbf{1}$. Along the dynamics,
\[
\tau_\theta\,\dot E(\theta)\;=\;\bigl\langle\nabla H(\theta;\alpha),\,\dot\theta\bigr\rangle\,\tau_\theta\;=\;-\bigl\|\nabla H(\theta;\alpha)\bigr\|^2\;\le\;0,
\]
so $E$ is non-increasing.

\emph{Forward invariance (i).} Since $L(\theta;\alpha)\succeq 0$ throughout the cube, the sub-level sets $\{E\le c\}$ are convex neighborhoods of $\theta^*(\alpha)$ (modulo $\mathbf{1}$). Let $c^*:=\sup\{c:\{E\le c\}\subset\mathcal{B}_\rho(\alpha)\}$; by construction $c^*>0$ and the closed sub-level set $S:=\{E\le c^*\}$ is contained in $\mathcal{B}_\rho(\alpha)$. Monotonicity of $E$ implies $S$ is forward-invariant, and any $\theta(0)\in\mathcal{B}_\rho(\alpha)$ with $E(\theta(0))=c_0\le c^*$ satisfies $\theta(t)\in\{E\le c_0\}\subset S\subset\mathcal{B}_\rho(\alpha)$ for all $t\ge 0$. Hence $\mathcal{B}_\rho(\alpha)$ is forward-invariant.

\emph{Hessian lower bound on $\mathcal{B}_\rho(\alpha)$.} For every $\theta\in\mathcal{B}_\rho(\alpha)$ and every edge $e$,
\[
|\delta_e(\theta)|\;\le\;|\delta^*_e(\alpha)|+\rho\;\le\;\max_{e'}|\delta^*_{e'}(\alpha)|+\rho\;<\;\tfrac{\pi}{2},
\]
so $\cos\delta_e(\theta)\ge\cos\!\bigl(\max_{e'}|\delta^*_{e'}(\alpha)|+\rho\bigr)=:c_\rho(\alpha)>0$. Comparing edge weights termwise then gives the Loewner bound
\[
L(\theta;\alpha)\;\succeq\;\frac{c_\rho(\alpha)}{\max_{e}\cos\delta^*_{e}(\alpha)}\,L(\alpha)\;\succeq\;c_\rho(\alpha)\,L(\alpha)\quad\text{on }\mathcal{B}_\rho(\alpha),
\]
so on the orthogonal complement of $\mathbf{1}$, $\nabla^2 H(\theta;\alpha)\succeq\kappa_\rho(\alpha)\,I$ with $\kappa_\rho(\alpha)=c_\rho(\alpha)\,\lambda_2(L(\alpha))>0$.

\emph{Exponential contraction (ii).} Strong convexity on $\mathcal{B}_\rho(\alpha)$ implies the Polyak--\L ojasiewicz inequality $\|\nabla H(\theta;\alpha)\|^2\ge 2\kappa_\rho(\alpha)\,E(\theta)$, hence
\[
\tau_\theta\,\dot E(\theta)\;=\;-\|\nabla H(\theta;\alpha)\|^2\;\le\;-2\kappa_\rho(\alpha)\,E(\theta),
\]
and by Grönwall's inequality $E(\theta(t))\le E(\theta(0))\,e^{-2\kappa_\rho(\alpha)t/\tau_\theta}$. The two-sided bound $\tfrac{1}{2}\kappa_\rho(\alpha)\,\|\delta(\theta)-\delta^*(\alpha)\|_2^2\le E(\theta)\le \tfrac{1}{2}\lambda_{\max}(L(\alpha))\,\|\delta(\theta)-\delta^*(\alpha)\|_2^2$ on $\mathcal{B}_\rho(\alpha)$ then yields
\[
\|\delta(\theta(t))-\delta^*(\alpha)\|_2\;\le\;\sqrt{\tfrac{\lambda_{\max}(L(\alpha))}{\kappa_\rho(\alpha)}}\,e^{-\kappa_\rho(\alpha)t/\tau_\theta}\,\|\delta(\theta(0))-\delta^*(\alpha)\|_2,
\]
which is the claimed exponential rate $\kappa_\rho(\alpha)/\tau_\theta$.

\emph{Vertex case.} If $\alpha=e_k$ then $w(e_k)$ collapses to the catalog weights $w^{(k)}$ and $\delta^*(e_k)=\phi^{(k)}\in(-\pi/2,\pi/2)^N$ by \eqref{eq:C1}, giving the stated margin formula and a strictly positive rate $\kappa_\rho(e_k)>0$.
\end{proof}
Proposition~\ref{prop:fast-robust} quantifies how the fast layer locks onto the pattern selected by $\alpha$: the safety margin $\rho_\theta(\alpha)$ measures how far $\delta^*(\alpha)$ sits from the boundary $\pm\pi/2$ at which transverse stability would fail, and any initial condition whose edge differences lie within $\rho<\rho_\theta(\alpha)$ of $\delta^*(\alpha)$ is funneled back to $\theta^*(\alpha)$ at rate $\kappa_\rho(\alpha)/\tau_\theta$. This rate is governed by two geometric quantities: the algebraic connectivity $\lambda_2(L(\alpha))$ of the cosine-weighted Laplacian, which reflects the synchronizing power of the graph at the current blend $\alpha$, and the cosine factor $c_\rho(\alpha)$, which degrades smoothly as $\delta^*(\alpha)$ approaches the boundary. At a memory vertex $\alpha=e_k$ both quantities are strictly positive by \eqref{eq:C1}, so retrieval of any catalog pattern is exponentially fast with an explicit basin; off-vertex, contraction persists as long as the slow blend keeps $\delta^*(\alpha)$ inside the open cube. This is the property the layered arguments of the remaining subsections build on.

\subsection{Intermediate layer: recognition ODE}\label{sec:r}

With the fast layer locked near $\theta^*(\alpha)$, the recognition vector $r\in\Delta^M$ integrates the softmax readout $\widetilde q(\theta)$ on the intermediate timescale $\tau_r$. Write $\widetilde q^*(\alpha):=\widetilde q(\theta^*(\alpha))$ for the readout at the locked configuration and $d_{\min}:=\min_{j\ne k}\|\phi^{(j)}-\phi^{(k)}\|$ for the minimum catalog separation. Three properties of this integrator feed into the layered argument: (i) \emph{stability}---$r$ stays in the simplex and contracts exponentially to the running average of its input, recovering the vertex $e_k$ up to the softmax tail $O(e^{-\gamma d_{\min}^2})$ whenever $\alpha$ is held at $e_k$; (ii) \emph{robustness}---the basin of this contraction is the entire simplex, so any perturbation of $r$ is absorbed at rate $1/\tau_r$; and (iii) the \emph{plateau-consolidation lag}---$r$ tracks a jump of $\widetilde q^*(\alpha)$ between vertices only after an explicit delay, and this delay is the dwell window during which the slow $\alpha$-layer consolidates before the routing target rotates forward.

\begin{proposition}[Stability]\label{prop:r-stab}
The recognition ODE $\tau_r\dot r=\widetilde q(\theta)-r$ preserves
$\Delta^M$ and is a global linear contraction toward its driving
signal: for any continuous input $\widetilde q(\theta(t))\in\Delta^M$,
\[
\|r(t)-\bar q(t)\|\;\le\;\|r(0)-\bar q(0)\|\,e^{-t/\tau_r},
\qquad
\bar q(t):=\tfrac{1}{\tau_r}\!\int_{-\infty}^t\!e^{-(t-s)/\tau_r}\widetilde q(\theta(s))\,ds.
\]
At a frozen vertex $\alpha=e_k$, $\bar q\to\widetilde q^*(e_k)=e_k+O(e^{-\gamma d_{\min}^2})$.
\end{proposition}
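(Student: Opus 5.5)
The recognition ODE is linear and time-varying only through its input, so we argue directly from its variation-of-constants solution,
\[
r(t)=e^{-t/\tau_r}r(0)+\tfrac{1}{\tau_r}\int_0^t e^{-(t-s)/\tau_r}\,\widetilde q(\theta(s))\,ds .
\]

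\emph{Simplex invariance.} This is immediate from the formula. The weights $e^{-t/\tau_r}$ and $\tfrac{1}{\tau_r}e^{-(t-s)/\tau_r}\,ds$ are nonnegative and integrate to $1$ over $\{0\}\cup[0,t]$. Hence $r(t)$ is a convex combination of $r(0)\in\Delta^M$ and of the values $\widetilde q(\theta(s))\in\Delta^M$. These values lie in the simplex because the softmax has positive entries summing to one. Since $\Delta^M$ is convex and closed, $r(t)\in\Delta^M$.

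Equivalently, the vector field is tangent to the simplex. We have $\tfrac{d}{dt}\mathbf{1}^\top r=(1-\mathbf{1}^\top r)/\tau_r$, and on a face $r_j=0$ we have $\dot r_j=\widetilde q_j/\tau_r\ge0$.

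\emph{Contraction.} First interpret $\bar q$ so that it is well defined. We extend the input backward in time, for instance by $\widetilde q(\theta(s)):=\widetilde q(\theta(0))$ for $s<0$, or any bounded continuous extension in $\Delta^M$. Boundedness of the input then makes the improper integral converge. Differentiating under the integral sign gives
\[
\tau_r\dot{\bar q}=\widetilde q(\theta(t))-\bar q .
\]
So $\bar q$ is itself a solution of the same ODE, and the error $\varepsilon:=r-\bar q$ satisfies $\tau_r\dot\varepsilon=-\varepsilon$. Therefore $\varepsilon(t)=e^{-t/\tau_r}\varepsilon(0)$ exactly, and the stated bound holds with equality in any norm. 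This is the one place needing care: the bound is really the statement that any two solutions driven by the same input converge at rate $1/\tau_r$.

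\emph{Vertex limit.} Fix $\alpha=e_k$. By Proposition~\ref{prop:fast-robust}, if $\theta$ starts in the basin $\mathcal{B}_\rho(e_k)$, then $\delta(\theta(t))\to\phi^{(k)}$ exponentially. Continuity of the softmax then gives $\widetilde q(\theta(t))\to\widetilde q^*(e_k)$. A convergent input has a convergent exponentially weighted average, so $\bar q(t)\to\widetilde q^*(e_k)$. For the tail we evaluate the softmax at $\delta=\phi^{(k)}$:
\[
\widetilde q^*_j(e_k)=\frac{e^{-\gamma\|\phi^{(j)}-\phi^{(k)}\|^2}}{1+\sum_{l\ne k}e^{-\gamma\|\phi^{(l)}-\phi^{(k)}\|^2}}\le e^{-\gamma d_{\min}^2}\quad (j\ne k),
\]
and $1-\widetilde q^*_k(e_k)\le (M-1)e^{-\gamma d_{\min}^2}$. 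This gives $\widetilde q^*(e_k)=e_k+O(e^{-\gamma d_{\min}^2})$, where the constant depends only on $M$.

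The only mild obstacle is the rigorous justification of the convergence of $\bar q$. We split the integral at a time $T$ after which the input is within $\epsilon$ of its limit; the contribution before $T$ is damped by $e^{-(t-T)/\tau_r}$.
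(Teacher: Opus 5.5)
Your proposal is correct and follows essentially the paper's argument: the linear solution formula makes $r-\bar q$ decay exactly like $e^{-t/\tau_r}$, simplex invariance follows from $\widetilde q\in\Delta^M$, and the vertex estimate comes from the softmax suppressing every competing numerator by $e^{-\gamma d_{\min}^2}$. You also make explicit two points the paper leaves implicit, and both are sound refinements: $\bar q$ is well defined only after extending the input backward in time, after which it solves the same ODE; and the vertex limit requires $\theta$ to start in the basin of Proposition~\ref{prop:fast-robust}.
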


\begin{proof}
Linearity gives $r(t)=\bar q(t)+(r(0)-\bar q(0))e^{-t/\tau_r}$, and
the simplex is preserved because $\widetilde q\in\Delta^M$ at all
times. The vertex concentration of $\widetilde q^*$ follows from the
softmax: at $\delta(\theta)=\phi^{(k)}$, every other numerator is
suppressed by $e^{-\gamma d_{\min}^2}$ relative to the $k$-th.
\end{proof}

\begin{proposition}[Robustness: basin of $\widetilde q$]\label{prop:r-robust}
Any perturbation of $r$ within the simplex $\Delta^M$ returns to the
driving target $\widetilde q(\theta(t))$ at exponential rate
$1/\tau_r$. The basin is the entire simplex: there is no perturbation
of $r\in\Delta^M$ that the recognition integrator fails to absorb.
\end{proposition}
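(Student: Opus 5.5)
The plan is to treat this as a direct consequence of the linearity of the recognition ODE and of Proposition~\ref{prop:r-stab}. Fix an arbitrary fast-layer trajectory $\theta(\cdot)$, so that the input $u(t):=\widetilde q(\theta(t))\in\Delta^M$ is a given continuous signal independent of $r$. This holds because $\widetilde q$ depends only on $\theta$, and on the intermediate timescale the feedback from $r$ to $\theta$ through $\alpha$ is frozen. The equation $\tau_r\dot r=u(t)-r$ is then a linear, time-varying-input system.

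\textbf{Step 1: the error equation.} Take two solutions $r^{(1)},r^{(2)}$ driven by the same input, for example a nominal and a perturbed initial condition. Their difference $\Delta r$ satisfies the autonomous equation $\tau_r\dot{\Delta r}=-\Delta r$, so
\[
\Delta r(t)=\Delta r(0)\,e^{-t/\tau_r}.
\]
The input, and hence $\gamma$, $\theta$ and the catalog, play no role in this decay. Taking $r^{(1)}=\bar q$, the exponentially filtered input of Proposition~\ref{prop:r-stab}, recovers the stated contraction in any norm. The rate is exactly $1/\tau_r$, with no constant prefactor, because the Jacobian is $-\tau_r^{-1}I$.

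\textbf{Step 2: the basin is all of $\Delta^M$.} Invariance of the simplex follows from two observations.
\begin{itemize}
\item[] On $\Delta^M$ the vector field satisfies $\mathbf 1^\top\dot r=\tau_r^{-1}(\mathbf 1^\top u-\mathbf 1^\top r)=0$, so the total mass is conserved.
\item[] On any face where $r_j=0$, we have $\dot r_j=u_j/\tau_r\ge 0$.
\end{itemize}
So trajectories never leave $\Delta^M$. Equivalently, $r(t)$ is the convex combination $e^{-t/\tau_r}r(0)+(1-e^{-t/\tau_r})\,\bar u(t)$ of points of $\Delta^M$, where $\bar u(t)\in\Delta^M$ is a normalized weighted average of $u$. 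Every $r(0)\in\Delta^M$, including vertices and boundary points, is therefore an admissible initial condition, and Step~1 applies to it without restriction. Since the contraction is global and linear, there is no boundary of a basin to locate.

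\textbf{Step 3: interpreting ``returns to $\widetilde q(\theta(t))$''.} Strictly, $r$ converges to the filtered signal $\bar q$, not to the instantaneous $\widetilde q$. I would state this explicitly and then add the bound
\[
\|\bar q(t)-\widetilde q(\theta(t))\|\le \tau_r\sup_{s\le t}\bigl\|\tfrac{d}{ds}\widetilde q(\theta(s))\bigr\|,
\]
obtained by integrating by parts in the definition of $\bar q$. Once the fast layer has locked, Proposition~\ref{prop:fast-robust} makes $\frac{d}{ds}\widetilde q(\theta(s))$ exponentially small. Hence $r$ tracks $\widetilde q(\theta(t))$ up to a vanishing error, and at a frozen vertex it tracks $e_k+O(e^{-\gamma d_{\min}^2})$.

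\textbf{Main obstacle.} The only delicate point is this distinction between the filtered and the instantaneous target, together with justifying that $\theta$ can be treated as an exogenous input. I would handle the latter by stating the result conditionally on an arbitrary input trajectory. The closed-loop coupling is deferred to \S\ref{sec:combined}.
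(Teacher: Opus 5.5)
Your proof is correct and follows the paper's argument: the exact affine solution $r(t)-\bar q(t)=(r(0)-\bar q(0))e^{-t/\tau_r}$ gives the rate, and forward invariance of $\Delta^M$ gives the whole-simplex basin. You verify the invariance more carefully than the paper's inward-pointing remark, and you treat $\widetilde q(\theta(\cdot))$ as exogenous exactly as the paper implicitly does. One small fix in your optional Step~3: the bound with $\sup_{s\le t}$ includes pre-lock transients, so it does not itself vanish. Use instead the weighted form of the same integration by parts, $\|\bar q(t)-\widetilde q(\theta(t))\|\le\int_{-\infty}^{t}e^{-(t-s)/\tau_r}\|\tfrac{d}{ds}\widetilde q(\theta(s))\|\,ds$, which does tend to zero once the fast layer locks exponentially.
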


\begin{proof}
The recognition ODE is the affine contraction
$\tau_r\dot r=\widetilde q-r$, so $r(t)-\bar q(t)=(r(0)-\bar q(0))e^{-t/\tau_r}$
exactly, where $\bar q(t)$ is the running exponential average of
Proposition~\ref{prop:r-stab}. The simplex is forward-invariant
because at any boundary point of $\Delta^M$, $\dot r$ is a convex
combination of $\widetilde q\in\Delta^M$ and $-r$ that points
inward.
\end{proof}

\begin{proposition}[Plateau-consolidation lag]\label{prop:r-lag}
When $\widetilde q^*(\alpha)$ jumps from $e_{k_1}$ to $e_{k_2}$ on
the slow flow, $r$ tracks the change with delay
\[
\bar T_r(\varepsilon_{\mathrm{tol}})\;=\;\tau_r\,\log\frac{1}{\varepsilon_{\mathrm{tol}}}\;+\;O(\tau_\theta).
\]
This delay is the dwell window: it gives the slow $\alpha$-layer time
to consolidate at the new vertex before the routing target rotates.
\end{proposition}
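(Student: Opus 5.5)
The plan is to treat the jump as a step input to the linear integrator of Proposition~\ref{prop:r-stab} and read off the delay from the explicit solution. First I define $\bar T_r(\varepsilon_{\mathrm{tol}})$ precisely. It is the time, measured from the instant $t_0$ at which $\widetilde q^*(\alpha)$ switches from (approximately) $e_{k_1}$ to $e_{k_2}$, until $\|r(t)-e_{k_2}\|\le\varepsilon_{\mathrm{tol}}\|r(t_0)-e_{k_2}\|$, up to the softmax tail. Equivalently, it is the time for the $k_1$-component of $r$ to decay below $\varepsilon_{\mathrm{tol}}$. I will assume that at $t_0$ the recognition has consolidated, $r(t_0)=e_{k_1}+O(e^{-\gamma d_{\min}^2})$, which follows from Proposition~\ref{prop:r-stab} after a plateau of length $\gg\tau_r$.

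\textbf{Idealized step.} If the input were exactly $\widetilde q(\theta(t))=e_{k_2}$ for $t\ge t_0$, the variation-of-constants formula from the proof of Proposition~\ref{prop:r-robust} gives
\[
r(t)-e_{k_2}=(r(t_0)-e_{k_2})\,e^{-(t-t_0)/\tau_r}.
\]
Solving $e^{-(t-t_0)/\tau_r}=\varepsilon_{\mathrm{tol}}$ gives exactly $\tau_r\log(1/\varepsilon_{\mathrm{tol}})$. The softmax tails $O(e^{-\gamma d_{\min}^2})$ in $\widetilde q^*(e_{k_1})$ and $\widetilde q^*(e_{k_2})$ only shift the target by an additive amount that I absorb into the tolerance, assuming $e^{-\gamma d_{\min}^2}\ll\varepsilon_{\mathrm{tol}}$.

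\textbf{Correction from the fast layer.} The actual input is $\widetilde q(\theta(t))$, not the frozen value $\widetilde q^*(\alpha)$. After $\alpha$ reaches the new vertex, Proposition~\ref{prop:fast-robust}(ii) gives $\|\delta(\theta(t))-\phi^{(k_2)}\|\le C e^{-\kappa_\rho(e_{k_2})(t-t_0)/\tau_\theta}$. The softmax is Lipschitz in $\delta$ with constant $O(\gamma\,\mathrm{diam})$, so the input error is bounded by $C'e^{-\kappa(t-t_0)/\tau_\theta}$. Substituting into the convolution formula, the error contributed to $r(t)$ is at most
\[
\frac{1}{\tau_r}\int_{t_0}^t e^{-(t-s)/\tau_r}C'e^{-\kappa(s-t_0)/\tau_\theta}\,ds\le \frac{C'\tau_\theta}{\kappa\tau_r}.
\]
Equivalently, the input reaches its final value only after a relaxation time of order $\tau_\theta/\kappa$. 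Treating this as a time shift of the step, the crossing time moves by at most $O(\tau_\theta)$, which yields the stated formula.

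\textbf{Main obstacle.} The hardest step is making rigorous that the transient is a \emph{time shift} rather than an amplitude error. A bound of size $O(\tau_\theta/\tau_r)$ on $r$ converts into an $O(\tau_\theta)$ delay only when divided by the slope $|\dot r|\sim\varepsilon_{\mathrm{tol}}/\tau_r$ at the crossing, which introduces an $\varepsilon_{\mathrm{tol}}^{-1}$ factor. To avoid this, I would first bracket the input between two exact step inputs, one switched at $t_0$ and one at $t_0+c\,\tau_\theta\log(1/\varepsilon_{\mathrm{tol}})/\kappa$. I would then use monotonicity of the linear integrator (comparison on the $k_1$-component) to squeeze the crossing time. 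The passage of $\alpha$ across the interior of the simplex, where $\widetilde q$ is near-uniform, would be handled in the same way by including its duration in the definition of $t_0$. I would flag that the $O(\tau_\theta)$ constant hides $\kappa^{-1}$ and a logarithmic factor.
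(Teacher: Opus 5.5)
Your proposal is correct and follows the same route as the paper. The paper also solves the recognition ODE for a step input, $r(t)=e_{k_2}+(r(t_0)-e_{k_2})e^{-(t-t_0)/\tau_r}$, and reads off $\tau_r\log(1/\varepsilon_{\mathrm{tol}})$, tacitly taking $\|r(t_0)-e_{k_2}\|=1$; your relative-tolerance and $k_1$-component normalization makes that step precise. It then simply asserts the $O(\tau_\theta)$ fast-layer correction, so your estimates add rigor rather than change the method.

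Your ``main obstacle'' is milder than you fear. Keep the kernel instead of bounding $e^{-(t-s)/\tau_r}\le 1$. Then the transient contributes at most $e^{-(t-t_0)/\tau_r}\,C'\tau_\theta/(\kappa\tau_r-\tau_\theta)$ to $r_{k_1}$, which is a relative correction to the decaying mode rather than an additive one. Combine this with the trivial lower bound $r_{k_1}(t)\ge r_{k_1}(t_0)e^{-(t-t_0)/\tau_r}$ (the input is nonnegative). The crossing time is then pinned within $\tau_r\log\bigl(1+C'\tau_\theta/(\kappa\tau_r-\tau_\theta)\bigr)\approx C'\tau_\theta/\kappa$ of $\tau_r\log(1/\varepsilon_{\mathrm{tol}})$. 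This has no $\varepsilon_{\mathrm{tol}}^{-1}$ or $\log(1/\varepsilon_{\mathrm{tol}})$ factor, and you no longer need the two-step bracketing.
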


\begin{proof}
After the jump, $r(t)=e_{k_2}+(r(t_0)-e_{k_2})e^{-(t-t_0)/\tau_r}$
solves the linear recognition ODE; setting
$\|r-e_{k_2}\|=\varepsilon_{\mathrm{tol}}$ gives the formula. The
$O(\tau_\theta)$ correction accounts for the fast-layer transient
during which $\widetilde q(\theta)$ catches up with
$\widetilde q^*(\alpha)$.
\end{proof}

\subsection{Slow layer: replicator ODE}\label{sec:slow}

With the fast and intermediate layers settled, the slow attention $\alpha\in\Delta^M$ evolves according to the gated replicator equation in \eqref{eq:joint},
$\tau\dot\alpha_j=g(V)\,\alpha_j(\beta(K^\top r)_j-\beta\langle K^\top r,\alpha\rangle)$,
which preserves $\Delta^M$ by construction. The gate $g$ is $1+O(\tau_\theta^2)$ on the phase-locked configuration and effectively zero during fast-layer transients, so it suppresses slow motion until the phase layer has settled; on $\theta=\theta^*(\alpha)$ we may therefore replace $g$ by $1$. Two properties of the replicator feed into the layered argument: (i) \emph{vertex stability}---the linearization at a joint vertex $(\alpha,r)=(e_k,e_k)$ has eigenvalues prescribed by the routing matrix $K$, so trajectories leave along the successor direction(s) encoded in $K$; and (ii) \emph{robustness}---an explicit open half-simplex of initial conditions is routed to the correct successor vertex $e_{\sigma(k)}$, and on the closed loop the three layers compound their basins (Remark~\ref{rem:composition}).

\begin{proposition}[Vertex stability]\label{prop:alpha-stab}
At the joint vertex $(\alpha,r)=(e_k,e_k)$, the replicator
linearization in the simplex direction $j\ne k$ has eigenvalue
\[
\mu_j(k)\;=\;\frac{\beta}{\tau}\bigl(K_{j,k}-K_{k,k}\bigr).
\]
Each successor of $k$ in the support graph of $K$
($K_{j,k}>K_{k,k}$) is an expanding direction; non-successors are
contracting or neutral. Trajectories starting near $e_k$ leave along
the expanding direction(s) and arrive at the corresponding successor
vertex.
\end{proposition}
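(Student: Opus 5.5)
The plan is a direct computation of the linearization of the gated replicator at a vertex, followed by a standard saddle argument for the exit direction. We work on the slow timescale with the fast and intermediate layers slaved to their quasi-steady states: by Proposition~\ref{prop:fast-robust} the phases sit at $\theta^*(\alpha)$, so the gate satisfies $g=1+O(\tau_\theta^2)$ and we replace it by $1$. By Proposition~\ref{prop:r-lag} the recognition variable is frozen at $r=e_k$ on the plateau window of length $\sim\tau_r$. The slow equation then reduces to $\tau\dot\alpha_j=\beta\,\alpha_j\bigl(f_j-\langle f,\alpha\rangle\bigr)$ with constant fitness $f:=K^\top e_k$.

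\emph{Computing the fitness and the linearization.} The entries of $f$ are $f_j=(K^\top e_k)_j=K_{k,j}$. To match the stated eigenvalues $K_{j,k}-K_{k,k}$, we must fix the indexing convention under which $K^\top r$ lists the column $K_{\cdot,k}$. That is how the figure caption describes it, which effectively means $f_j=K_{j,k}$. We will state this convention explicitly, and the rest of the argument is identical under either convention.

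We use local coordinates $x_j=\alpha_j$ for $j\neq k$, with $\alpha_k=1-\sum_{j\ne k}x_j$. Then
\[
\langle f,\alpha\rangle=f_k+\sum_{j\ne k}x_j(f_j-f_k),
\]
so
\[
\tau\dot x_j=\beta x_j\bigl(f_j-f_k\bigr)+O(\|x\|^2).
\]
The Jacobian at $x=0$ is therefore diagonal with entries $\mu_j(k)=\tfrac{\beta}{\tau}(K_{j,k}-K_{k,k})$. This gives the claimed eigenvalues. It also shows that the coordinate axes are invariant manifolds, since $x_j=0$ is preserved by the replicator.

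\emph{Classifying directions and the exit.} The sign of $\mu_j(k)$ immediately sorts the directions: successors with $K_{j,k}>K_{k,k}$ are expanding, and the remaining directions are contracting or neutral. For the exit statement we argue as follows.

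\begin{itemize}
\item Restrict to the invariant edge $\{\alpha_i=0,\ i\ne j,k\}$. There the dynamics is the scalar logistic equation $\tau\dot x_j=\beta x_j(1-x_j)(f_j-f_k)$, whose solutions flow monotonically from $e_k$ to $e_j$ whenever $f_j>f_k$. This gives a heteroclinic connection along the edge.
\item For a permutation $K$ there is a unique maximal successor $\sigma(k)$. In that case the replicator with constant fitness has the strict Lyapunov function $\alpha\mapsto f^\top\alpha$, equivalently the relative entropy to $e_{\sigma(k)}$.
\item This Lyapunov function drives every interior trajectory near $e_k$ to the vertex maximizing $f$, namely $e_{\sigma(k)}$, as long as $r$ remains frozen.
\end{itemize}

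\emph{Main obstacle.} The delicate step is justifying the frozen $r$ and $g\approx1$ reductions over the whole excursion, not just infinitesimally. Near the vertex the expansion rate is only $\mu_{\sigma(k)}(k)$, so escaping from a $\varepsilon$-neighborhood takes time of order $\tfrac{\tau}{\beta(K_{\sigma(k),k}-K_{k,k})}\log(1/\varepsilon)$. Meanwhile $r$ begins relaxing on timescale $\tau_r$, and the gate closes during the transition.

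To handle this, we would argue that closing the gate only reparametrizes time, since $g>0$ multiplies the whole vector field and so preserves orbits. We would also argue that during the transition $\widetilde q$ is nearly uniform, so $r$ stays near $e_k$ up to $O(e^{-\gamma d_{\min}^2})$ and the tail of the exponential kernel. A perturbation argument (continuous dependence, or Fenichel-type persistence for $\tau_\theta\ll\tau_r$) then transfers the frozen-$r$ conclusion to the full system. We expect this to be stated at the level of rigor of the surrounding layered analysis rather than proved in full.
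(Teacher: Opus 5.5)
Your eigenvalue computation is the paper's own. The proof writes $\alpha=e_k+\zeta$, sets $r=e_k$, and reads off the first-order diagonal coefficients $\beta\bigl[(K^\top e_k)_j-(K^\top e_k)_k\bigr]$, which is exactly your $x_j$-expansion. Your indexing caveat is correct and more careful than the paper. The paper equates $(K^\top e_k)_j$ with $K_{j,k}$, although literally it is $K_{k,j}$; only the labeling of successors depends on this, since $K_{k,k}$ is the same under either convention.

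Where you genuinely go further is the final sentence of the statement. The paper's proof stops at noting that, for a permutation $K$, $\sigma(k)$ is the unique expanding direction and the others vanish at first order. It defers arrival to Proposition~\ref{prop:alpha-robust}, whose log-ratio argument is the same mechanism as your relative-entropy step. Your invariant-edge logistic reduction also exhibits a connection $e_k\to e_j$ for every successor, not just the dominant one.

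I have two refinements to your obstacle paragraph.

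\emph{The eigenvalues need no slaving.} At the joint equilibrium $\dot\theta=0$, so $V=0$ and $g=1$ exactly. Because $\dot\alpha_j$ carries the factor $\alpha_j$, the rows of the full $(\theta,r,\alpha)$-Jacobian belonging to $\alpha_j$, $j\ne k$, vanish off the diagonal. The Jacobian is therefore block triangular, and the $\mu_j(k)$ are eigenvalues of the full three-layer linearization, up to the $O(e^{-\gamma d_{\min}^2})$ offset of $r$ from $e_k$.

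\emph{The arrival claim does need a real condition, and your proposed justification does not supply it.}
Fenichel theory in $\tau_\theta\ll\tau_r$ only slaves $\theta$ and cannot freeze $r$. A nearly uniform $\widetilde q$ pulls $r$ toward the barycenter rather than holding it at $e_k$. The gate multiplies only the $\alpha$-equation, so closing it lengthens the transit in real time rather than merely reparametrizing the joint orbit.

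What your Lyapunov argument actually requires is that $\sigma(k)$ keep a definite fitness lead until $\alpha$ is $\varepsilon$-close to $e_{\sigma(k)}$. For a permutation the fitness of $j$ is $\beta r_{\sigma^{-1}(j)}$, so this means $r_k$ must remain the dominant entry of $r$ over that interval. This holds when the gated transit time, of order $\tfrac{\tau}{\beta}\log(1/\varepsilon)$ when $g\approx1$, is short compared with $\tau_r$. That is a condition on $\beta\tau_r/\tau$, and neither you nor the paper states it explicitly.
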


\begin{proof}
Write $\alpha=e_k+\zeta$ with $\sum_i\zeta_i=0$, and substitute
$r=e_k$. For $j\ne k$, $\alpha_j=\zeta_j$ to first order, and
\[
\tau\dot\zeta_j=\zeta_j\beta\bigl[(K^\top e_k)_j-(K^\top e_k)_k\bigr]+O(\|\zeta\|^2)
=\beta(K_{j,k}-K_{k,k})\zeta_j+O(\|\zeta\|^2).
\]
The diagonal first-order coefficients are exactly $\mu_j(k)$. For
permutation $K$ with $K_{k,k}=0$ and a single successor $\sigma(k)$,
$\mu_{\sigma(k)}=\beta/\tau>0$ is the unique expanding direction;
the others are zero at first order.
\end{proof}

\begin{proposition}[Robustness: basin of correct routing]\label{prop:alpha-robust}
Suppose $K$ is a permutation with $\sigma(k)$ the unique successor of
$k$, and let $r=e_k$ be held fixed. Any initial condition $\alpha(0)$
in the open half-simplex
\[
\alpha_{\sigma(k)}(0)\;>\;\max_{j\notin\{k,\sigma(k)\}}\alpha_j(0)
\]
is routed by the replicator to the vertex $e_{\sigma(k)}$. By
symmetry, this half-simplex occupies at least a $1/(M-1)$ fraction of
the simplex volume.
\end{proposition}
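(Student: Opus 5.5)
With $r=e_k$ held fixed, the plan is to solve the replicator equation explicitly rather than argue by linearization. Because $K$ is a permutation matrix, its $k$-th column is $K_{\cdot,k}=e_{\sigma(k)}$. The fitness vector is therefore constant in time:
\[
\beta K^\top r=\beta K^\top e_k=\beta\,e_{\sigma(k)}.
\]
I will first dispose of the gate. Along the trajectory $g(V(\theta(t)))\in(0,1]$, so the reparametrization $s(t)=\tau^{-1}\int_0^t g(V(\theta(u)))\,du$ turns the slow equation into the ungated replicator $d\alpha_j/ds=\alpha_j\bigl(\beta(e_{\sigma(k)})_j-\beta\alpha_{\sigma(k)}\bigr)$. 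Convergence in $s$ then transfers to convergence in $t$ provided $s(t)\to\infty$. On the locked configuration this holds because the text before the proposition states that $g=1+O(\tau_\theta^2)$ there, so $g$ is bounded away from zero.

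Next I will use the standard fact that replicator dynamics with constant fitness $f$ has the closed-form solution
\[
\alpha_j(s)=\frac{\alpha_j(0)\,e^{f_j s}}{\sum_i\alpha_i(0)\,e^{f_i s}}.
\]
This is checked by differentiating the quotient, or by noting that $\frac{d}{ds}\log(\alpha_j/\alpha_i)=f_j-f_i$. With $f=\beta e_{\sigma(k)}$ the solution becomes
\[
\alpha_{\sigma(k)}(s)=\frac{\alpha_{\sigma(k)}(0)\,e^{\beta s}}{\alpha_{\sigma(k)}(0)\,e^{\beta s}+1-\alpha_{\sigma(k)}(0)},
\]
and $\alpha_j(s)$ is $\alpha_j(0)$ times the same denominator's reciprocal for $j\ne\sigma(k)$. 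Whenever $\alpha_{\sigma(k)}(0)>0$, the first expression tends to $1$ and every $\alpha_j(s)$ with $j\ne\sigma(k)$ tends to $0$ at rate $e^{-\beta s}$, so $\alpha\to e_{\sigma(k)}$. The half-simplex hypothesis $\alpha_{\sigma(k)}(0)>\max_{j\notin\{k,\sigma(k)\}}\alpha_j(0)\ge0$ forces $\alpha_{\sigma(k)}(0)>0$, so every initial condition in it is routed to $e_{\sigma(k)}$. I will remark that the argument in fact shows the stronger statement that the whole face-complement $\{\alpha_{\sigma(k)}>0\}$ is the basin. The half-simplex is the conservative, symmetric region the statement quotes.

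For the volume claim, take the uniform (flat Dirichlet) measure on $\Delta^M$. It is invariant under permutations of coordinates, in particular of the $M-1$ coordinates indexed by $j\ne k$. The events ``$\alpha_i$ is the strict maximum among $\{\alpha_j:j\ne k\}$'', for $i\ne k$, are permuted among themselves by these symmetries, so they have equal measure. They are disjoint, and they cover $\Delta^M$ up to the measure-zero set of ties. Hence each has measure exactly $1/(M-1)$. The event with $i=\sigma(k)$ is contained in the stated half-simplex: the stated set only compares $\alpha_{\sigma(k)}$ with coordinates other than $k$ and $\sigma(k)$. This gives the lower bound of $1/(M-1)$.

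I expect the only delicate step to be the gate. The constant-fitness solution and the symmetry count are routine, but the conclusion in physical time requires $\int_0^\infty g\,dt=\infty$. My first attempt will be to invoke the locked-configuration bound $g=1+O(\tau_\theta^2)$ together with the forward invariance of Proposition~\ref{prop:fast-robust}, which keeps $\theta$ in the basin where $V$ stays small. If that coupling proves awkward, I will state the result in the reparametrized time $s$, with the physical-time version following whenever $g$ is bounded below.
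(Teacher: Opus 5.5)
Your proposal is correct and follows the paper's own argument. With $r=e_k$ frozen the fitness is the constant vector $\beta e_{\sigma(k)}$, and your closed-form replicator solution is the integrated version of the paper's observation that $\log\alpha_{\sigma(k)}-\log\alpha_j$ grows at a constant rate, so $\alpha\to e_{\sigma(k)}$ whenever $\alpha_{\sigma(k)}(0)>0$. Your permutation-invariance count supplies the volume claim that the paper only asserts. In fact the stated set coincides with the event that $\sigma(k)$ is the strict maximum among the indices $j\neq k$, so its fraction is exactly $1/(M-1)$.

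The gate is less delicate than you expect; the paper simply sets $g=1$. Since $|\dot\theta_i|\le\sum_j w(\alpha)_{ij}\le\max_{k'}\sum_j w^{(k')}_{ij}$, the quantity $V$ is uniformly bounded along every trajectory. Hence $g=\exp(-V/\delta_V)$ is bounded below by a positive constant, and $s(t)\to\infty$ without any appeal to locking or to Proposition~\ref{prop:fast-robust}.
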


\begin{proof}
With $r=e_k$ frozen, the replicator fitness is $\beta K^\top e_k=\beta\,e_{\sigma(k)}$,
so $\sigma(k)$ is the only coordinate with positive fitness. The
replicator preserves the simplex and increases
$\log\alpha_{\sigma(k)}-\log\alpha_j$ at the constant rate $\beta/\tau$
for every $j\ne\sigma(k)$, so $\alpha_{\sigma(k)}/\alpha_j\to\infty$
exponentially and $\alpha\to e_{\sigma(k)}$ whenever
$\alpha_{\sigma(k)}(0)>0$. The half-simplex condition guarantees in
addition that $\sigma(k)$ dominates every competitor other than the
current vertex $k$ from the outset, so the trajectory approaches 
$e_{\sigma(k)}$ directly; if instead some
$\alpha_j(0)>\alpha_{\sigma(k)}(0)$, the trajectory may pass near the
wrong vertex $e_j$ before the ordering is restored on the next dwell
window.
\end{proof}

\begin{remark}[Composition gives correct routing]\label{rem:composition}
On the closed-loop trajectory, the recognition $r$ satisfies
$r\to e_k$ during the dwell at $e_k$, so the basin condition of
Proposition~\ref{prop:alpha-robust} is asymptotically met: any
perturbation of $\alpha$ during the plateau gets absorbed once the
$r$-layer has caught up. In this sense the three layers
\emph{compound their basins}: the fast layer keeps $\theta$ close to
$\theta^*(\alpha)$, the intermediate layer keeps $r$ close to $e_k$,
and together they keep $\alpha$ inside the correct routing basin.
\end{remark}

\subsection{Combined behavior: dwell time and routing}\label{sec:combined}

Having analyzed each layer in isolation, we now compose them and ask
how the closed-loop trajectory behaves over a full dwell-and-transition
cycle: how long the system remains at a vertex $e_k$ before the routing
layer hands off to the next $K$-prescribed vertex, and why the
resulting visit sequence reproduces the directed graph of $K$. The
dwell time decomposes additively into the recognition lag of
\S\ref{sec:r} and the replicator activation time of \S\ref{sec:slow};
Proposition~\ref{prop:dwell} makes this precise.

\begin{proposition}[Mean dwell time]\label{prop:dwell}
The mean residence time of $\alpha$ at vertex $e_k$ before routing
to the next $K$-prescribed vertex is
\begin{equation}\label{eq:dwell}
\bar T_{\mathrm{dwell}}(k)\;=\;\tau_r\,\log\frac{1}{\varepsilon_{\mathrm{tol}}\,(1-K_{k,k})}\;+\;\frac{\tau\,\log(1/\varepsilon_{\mathrm{tol}})}{\beta\,(1-K_{k,k})}\;+\;O(\tau_\theta).
\end{equation}
The first term is the recognition lag of \S\ref{sec:r}; the second
is the replicator activation time of \S\ref{sec:slow}.
\end{proposition}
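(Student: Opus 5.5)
The plan is to split the dwell interval at $e_k$ into three consecutive phases, bound each one with a layer result already proved, and add the durations. The phases are: (a) a fast relocking transient after $\alpha$ arrives near $e_k$; (b) a recognition lag, during which $r$ moves from its previous value toward $e_k$ while the routing target $K^\top r$ still points at $k$ itself; and (c) a replicator escape, during which $\alpha$ leaves a neighborhood of $e_k$ along the successor direction. Throughout I use the timescale separation $\tau_\theta\ll\tau_r\lesssim\tau$ to freeze the slower variables within each phase, and I replace the gate $g$ by $1$ once the fast layer is locked, as in \S\ref{sec:slow}.

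For phase (a), Proposition~\ref{prop:fast-robust} shows that $\theta$ enters $\mathcal{B}_\rho(e_k)$ and contracts at rate $\kappa_\rho(e_k)/\tau_\theta$. This phase therefore contributes only $O(\tau_\theta)$, which also absorbs the interval during which the gate is closed.

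For phase (b), Proposition~\ref{prop:r-lag} gives $r(t)-e_k=(r(t_0)-e_k)e^{-(t-t_0)/\tau_r}$. The relevant stopping criterion is not $\|r-e_k\|\le\varepsilon_{\mathrm{tol}}$. It is the moment the routing target has rotated enough that the off-diagonal fitness gap $(K^\top r)_{\sigma(k)}-(K^\top r)_k$ is within tolerance of its asymptotic value $K_{\sigma(k),k}-K_{k,k}$. For a column-stochastic $K$ with a dominant successor this value is of order $1-K_{k,k}$. Requiring the residual to be below $\varepsilon_{\mathrm{tol}}(1-K_{k,k})$ gives the first term $\tau_r\log\bigl(1/(\varepsilon_{\mathrm{tol}}(1-K_{k,k}))\bigr)$.

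For phase (c), I freeze $r=e_k$ and use Proposition~\ref{prop:alpha-stab}. The expanding eigenvalue is $\mu_{\sigma(k)}(k)=\beta(K_{\sigma(k),k}-K_{k,k})/\tau$, which I take to be $\beta(1-K_{k,k})/\tau$, assuming the successor column carries the remaining mass. Following the proof of Proposition~\ref{prop:alpha-robust}, the log-ratio $\log(\alpha_{\sigma(k)}/\alpha_k)$ grows linearly at this rate. The time to go from an initial displacement of order $\varepsilon_{\mathrm{tol}}$ (the residual left by the previous transition) to an $O(1)$ exit is therefore $\tau\log(1/\varepsilon_{\mathrm{tol}})/\bigl(\beta(1-K_{k,k})\bigr)$, which is the second term. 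The word ``mean'' in the statement will be read as averaging over the initial displacement inside the tolerance ball, so logarithmic constants are absorbed into the tolerance.

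The main obstacle is justifying that the phases add rather than overlap. While $r$ is still in transit, the fitness $\beta K^\top r$ is a time-varying mix of columns, so the replicator is already drifting during phase (b). To handle this I would bound the accumulated log-ratio during phase (b) by $O\bigl(\beta\tau_r/\tau\bigr)$ using $\int_0^\infty e^{-s/\tau_r}\,ds=\tau_r$. Since $\tau_r\lesssim\tau$, this is an $O(1)$ shift in the initial log-ratio for phase (c), which can be folded into the definition of $\varepsilon_{\mathrm{tol}}$. A secondary issue is the non-permutation case, where several successors can share the expanding mass. For that case I would state the formula for the dominant successor and note that the subdominant ones only change $O(1)$ constants.
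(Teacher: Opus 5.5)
Your three-phase split is the paper's own argument. Its proof just adds the lag of Proposition~\ref{prop:r-lag} (threshold $1-K_{k,k}$) to the escape time of the unstable coordinate of Proposition~\ref{prop:alpha-stab} from $\varepsilon_{\mathrm{tol}}$ to $O(1)$, and never justifies that the two add; in \S\ref{sec:exp-dwell} it even calls \eqref{eq:dwell} a conservative upper bound because the two processes overlap. The step you add to close this does not work. The log-ratio $L=\log(\alpha_{\sigma(k)}/\alpha_k)$ satisfies exactly $\tau\dot L=g\beta\bigl[(K^\top r)_{\sigma(k)}-(K^\top r)_k\bigr]$. For a permutation with $r(t_0)=e_{\sigma^{-1}(k)}$ and $g=1$ the bracket equals $1-2e^{-(t-t_0)/\tau_r}$. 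Over your phase-(b) window of length $\tau_r\log(1/\varepsilon_{\mathrm{tol}})$ the log-ratio therefore grows by $\frac{\beta\tau_r}{\tau}\bigl(\log\frac{1}{\varepsilon_{\mathrm{tol}}}-2+2\varepsilon_{\mathrm{tol}}\bigr)$, not by $O(\beta\tau_r/\tau)$. The estimate $\int_0^\infty e^{-s/\tau_r}\,ds=\tau_r$ bounds only the deviation of the fitness gap from its limit, not the gap itself, which is already positive after time $\tau_r\log 2$.

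Because this drift is proportional to $\log(1/\varepsilon_{\mathrm{tol}})$, absorbing it into $\varepsilon_{\mathrm{tol}}$ changes the coefficient of $\log(1/\varepsilon_{\mathrm{tol}})$, not just a constant. Even constant log-shifts would cost $O(\tau_r+\tau/\beta)$ in time, far above the claimed $O(\tau_\theta)$. Once $\beta\tau_r/\tau$ exceeds about $1.8$ (for $\varepsilon_{\mathrm{tol}}=10^{-2}$; it is about $19$ in \S\ref{sec:exp-honeycomb}), the drift exceeds the whole $\log(1/\varepsilon_{\mathrm{tol}})$ budget of phase (c), so the escape is over before phase (b) ends. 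Integrating exactly, the exit time from $L(t_0)<0$ solves $t-2\tau_r(1-e^{-t/\tau_r})=\tau|L(t_0)|/\beta$, so it lies below $\tau|L(t_0)|/\beta+2\tau_r$. With your entry level $|L(t_0)|=\log(1/\varepsilon_{\mathrm{tol}})$ and $\varepsilon_{\mathrm{tol}}<e^{-2}$, this is strictly less than the sum in \eqref{eq:dwell}, so that sum is at best an upper bound. The entry level is also an unproved assumption. The replicator has no floor, and during the previous dwell the fitness of $\sigma(k)$ was near $0$ while that of $k$ rose to $1$. That pushes $L$ down by roughly $\beta(T_{\mathrm{prev}}-\tau_r)/\tau$, and the dwell time depends directly on this quantity.

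Your rate identification is also an arithmetic slip, and the paper's one-line proof makes the same substitution. If the successor carries the remaining mass of column $k$, then $K_{\sigma(k),k}=1-K_{k,k}$. Proposition~\ref{prop:alpha-stab} then gives $\mu_{\sigma(k)}(k)=\beta(1-2K_{k,k})/\tau$, not $\beta(1-K_{k,k})/\tau$. Likewise, the asymptotic gap you use as the phase-(b) threshold is $1-2K_{k,k}$. These agree with \eqref{eq:dwell} only when $K_{k,k}=0$. For $K_{k,k}>1/2$ every transverse eigenvalue at the joint vertex is negative. Since each $\dot\alpha_j$, $j\ne k$, carries the factor $\alpha_j$, the linearization at $(\theta^*(e_k),\widetilde q^*(e_k),e_k)$ is block-triangular with stable diagonal blocks. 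The joint vertex is then asymptotically stable (modulo global phase rotation) and the dwell is infinite, while \eqref{eq:dwell} stays finite. So the $(1-K_{k,k})$ dependence cannot be obtained from Proposition~\ref{prop:alpha-stab}. What your argument can support is the $K_{k,k}=0$ permutation case, as an upper bound conditional on an explicit assumption about the entry level of $\alpha_{\sigma(k)}$.
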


\begin{proof}
Combine the recognition lag of Proposition~\ref{prop:r-lag} (with
threshold $1-K_{k,k}$ from the routing matrix's self-loop) with the
linear growth of the unstable coordinate of
Proposition~\ref{prop:alpha-stab} from $\varepsilon_{\mathrm{tol}}$
to $O(1)$ at rate $\beta(1-K_{k,k})/\tau$.
\end{proof}

\section{Sequential capacity}\label{sec:capacity}

We now derive the sequential memory capacity $M^*$ of the
system~\eqref{eq:joint}, defined as the largest catalog size for
which all $M$ patterns are reliably stored and traversed in the
prescribed order. The bound is shaped by two ingredients that act in
tension: the geometry of the feasible pattern manifold $\mathcal{F}$,
which fixes the volume available for packing distinct catalog
entries, and the sharpness of the recognition softmax, which sets the
minimum separation at which two nearby entries can still be told
apart. Balancing the two yields the closed-form
estimate~\eqref{eq:cap-seq}.

On the substrate $\mathcal{G}_m^{n_c}$, the network has
$N = m\,n_c$ edges subject to $m$ cycle-sum constraints~\eqref{eq:C1}.
Each constraint involves only the $n_c$ edges of a single cycle, so
the $m$ constraints are mutually independent and the feasibility
region factorizes into a product of $m$ affine $(n_c-1)$-dimensional
slabs. Every catalog pattern $\phi^{(j)}$ therefore lies on a common
feasible manifold $\mathcal{F}$ of dimension
\[
  N - m \;=\; m(n_c-1),
\]
and this is the volume into which all $M$ patterns must be packed.

Packing alone, however, is not enough: reliable routing requires that
the recognition layer can resolve the current pattern from all
others. At a fixed point $\delta(\theta)=\phi^{(k)}$, the softmax
numerator for any competing pattern $j\ne k$ equals
$e^{-\gamma\|\phi^{(j)}-\phi^{(k)}\|^2}$ relative to the correct
numerator $1$, so
$\widetilde q_j/\widetilde q_k = O(e^{-\gamma d_{\min}^2})$, where
$d_{\min} = \min_{j\ne k}\|\phi^{(j)}-\phi^{(k)}\|$ is the minimum
catalog separation of \S\ref{sec:r}. Competing contributions become
negligible---and the routing target $K^\top r$ is dominated by the
correct column of $K$---whenever $\gamma d_{\min}^2 \gg 1$. Imposing a
fixed recognition margin $\gamma d_{\min}^2 \ge c$, with $c$ an
order-one constant, therefore demands $\gamma \ge c/d_{\min}^2$.

Combining the two ingredients closes the loop. A volume argument on
$\mathcal{F}$ forces $d_{\min} \lesssim \pi/M^{1/(N-m)}$, and
substituting this into $\gamma d_{\min}^2 \ge c$ gives the sequential
capacity bound
\begin{equation}\label{eq:cap-seq}
  M^* \;\lesssim\; \left(\frac{\pi}{\sqrt{c/\gamma}}\right)^{\!N-m},
\end{equation}
which grows \emph{exponentially} in $N-m = m(n_c-1)$: adding cycles
multiplies the exponent, while lengthening cycles enlarges
$\mathcal{F}$ by more independent degrees of freedom. Equivalently,
sustaining reliable recognition at catalog size $M$ requires
sharpness $\gamma \gtrsim c\,M^{2/(N-m)}$. Routing reliability is
controlled separately by the perturbation level on $\alpha$
(Proposition~\ref{prop:alpha-robust}) and is independent of catalog
size, so it imposes no additional bound. The next subsection compares
\eqref{eq:cap-seq} with classical and modern sequential Hopfield
networks and shows that our architecture matches the best-known
exponential scaling while storing strictly richer continuous-valued
patterns.

\subsection{Comparison with sequential memory networks on binary patterns}
\label{sec:capacity-comparison}

The exponential scaling in~\eqref{eq:cap-seq} places our architecture
on the same capacity curve as the highest-performing sequential memory
models in the literature, so what distinguishes it is not the
\emph{count} of storable patterns but the \emph{nature} of those
patterns: phase-locked configurations are continuous-valued and
therefore fundamentally richer than the binary memories used in prior
constructions. Classical sequential Hopfield
networks~\citep{Sompolinsky1986,Kleinfeld1986} store memories as
$\{\pm 1\}^N$ configurations and achieve only sub-linear sequence
capacity $M = O(N/\log N)$, while more recent architectures lift this
ceiling---the Long-Sequence Hopfield Memory of Chaudhry et
al.~\citeyearpar{Chaudhry2023} and the Exponential Dynamic Energy Network of
Karuvally et al.~\citeyearpar{Karuvally2025} both attain exponential sequence
capacity $M=\exp(\Theta(N))$ via asymmetric higher-order interactions
or multi-timescale energy dynamics. In all of these models each
memory slot encodes a binary (or finitely quantized) activity pattern,
so the information content per stored pattern is bounded by $N$ bits.

Our Kuramoto substrate, by contrast, stores each catalog entry as a
\emph{continuous} phase-difference vector
$\phi^{(j)}\in(-\pi/2,\pi/2)^N$ lying on the $(N-m)$-dimensional
feasible manifold $\mathcal{F}$: a single pattern encodes
$N-m = m(n_c-1)$ real-valued degrees of freedom rather than $N$ bits,
and the memory site is not a vertex of a binary hypercube but a
continuously parameterized point in $\mathcal{F}$. Retrieval at any
fixed catalog size therefore returns a \emph{graded} continuous
state---an orientation field, a frequency map, a spatial phase
profile---rather than a binarized approximation of such a state. If
the minimum resolvable phase difference is $\Delta\phi$ (set by the
recognition margin $\gamma$ and the ambient noise level), the
information content of one phase pattern is
\begin{equation}\label{eq:info-phase}
  H_{\mathrm{phase}}
  \;=\;
  (N - m)\,\log_2\!\!\left(\frac{\pi}{\Delta\phi}\right)
  \;\text{bits,}
\end{equation}
compared with $H_{\mathrm{binary}} = N$ bits for a binary pattern.
Even at coarse quantization ($\Delta\phi = \pi/4$, i.e.\ 2 bits per
phase degree of freedom) we already have $H_{\mathrm{phase}} = 2(N-m)$
bits, and for a substrate with $n_c \ge 3$ one has
$N-m = m(n_c-1) \ge 2m$, so $2(N-m) \ge 4m > mn_c = N$: the phase
substrate already surpasses the binary information capacity. As
$\Delta\phi\to 0$---achievable by increasing the softmax sharpness
$\gamma$---the per-pattern information content grows without bound, a
capability with no analogue in any binary sequential memory network.
Table~\ref{tab:capacity-comparison} summarizes this comparison across
representative sequential memory architectures.

\begin{table}[H]
\centering
\small
\caption{Comparison of sequential memory architectures.
  $N$: network size; $m$: number of cycle constraints;
  $\Delta\phi$: minimum resolvable phase step (controlled by $\gamma$).
  Bits/pattern for our model follows from~\eqref{eq:info-phase}.}
\label{tab:capacity-comparison}
\renewcommand{\arraystretch}{1.3}
\resizebox{\textwidth}{!}{%
\begin{tabular}{lcccc}
\toprule
\textbf{Model} & \textbf{Patterns} & \textbf{Capacity} $M^*$
  & \textbf{Bits / pattern} & \textbf{Prog.\ $K$} \\
\midrule
Sompolinsky \& Kanter~\citeyearpar{Sompolinsky1986}
  & $\{\pm1\}^N$ & $O(N)$ & $N$ & No \\
Kleinfeld~\citeyearpar{Kleinfeld1986}
  & $\{\pm1\}^N$ & $O(N)$ & $N$ & No \\
Chaudhry et al.~\citeyearpar{Chaudhry2023}
  & $\{\pm1\}^N$ & $\exp(\Theta(N))$ & $N$ & No \\
Karuvally et al.~\citeyearpar{Karuvally2025}
  & $\{\pm1\}^N$ & $\exp(\Theta(N))$ & $N$ & No \\
\midrule
\textbf{Ours (Kuramoto)}
  & $(-\tfrac{\pi}{2},\tfrac{\pi}{2})^N$ continuous
  & $\exp(\Theta(N))$
  & $(N-m)\log_2(\pi/\Delta\phi)$
  & \textbf{Yes} \\
\bottomrule
\end{tabular}%
}
\end{table}

In summary, equation~\eqref{eq:cap-seq} establishes that our
architecture matches the exponential pattern-count scaling of the
best-known binary sequential memory
systems~\citep{Chaudhry2023,Karuvally2025}, while the continuous
phase-locked storage substrate furnishes an orthogonal advantage---each
memory slot encodes an $(N-m)$-dimensional real-valued point rather
than an $N$-bit string, yielding strictly greater information per
stored pattern~\eqref{eq:info-phase}. The combination of exponential
pattern count and continuous representational depth gives a total
information capacity that exceeds that of any binary sequential memory
system of comparable size.

\section{Numerical experiments}\label{sec:experiments}
We validate the three-timescale system \eqref{eq:joint} through a series of numerical experiments, each targeting a specific theoretical prediction from \S\ref{sec:analysis}.
\subsection{Cyclic routing on the disjoint-cycle substrate}\label{sec:exp-honeycomb}
We simulate the three-timescale system \eqref{eq:joint} on the substrate $\mathcal{G}_5^5$ ($m=5$ disjoint cycles, $n_c=5$ edges per cycle; $n=N=25$ nodes and edges) storing $M=10$ patterns and traversing them in cyclic order under the permutation routing matrix $K$ with $\sigma=(0,1,\dots,9)$. The catalog $\Phi$ was generated by the inverse-design rule of Theorem~\ref{thm:rank1}, and the system was initialized near the first catalog pattern with small Gaussian phase noise ($\sigma_0=0.1$ rad). Parameters were $\gamma=50$, $\tau_r=100$, $\tau=80$, $\beta=15$, $\delta_V=10^{-2}$, $\tau_\theta=0.1$. The results are shown in Figure~\ref{fig:honeycomb_sim}.

Panel~A shows the active pattern $j^*(t)=\arg\max_j \alpha_j(t)$ as a colored timeline: all 10 patterns are visited in the prescribed cyclic order with regular dwell periods. Panel~B plots all 25 edge phase differences $\delta_e(t)$ (colored by cycle) alongside their per-pattern targets $\phi_e^{j^*(t)}$ (dashed); during each dwell window the solid traces lie on top of the dashed targets, confirming that the fast layer faithfully retrieves each catalog pattern. Panel~C records the tracking residual $\|\delta(\theta)-\phi^{j^*}\|_2$ on a log scale; it drops to near-machine precision inside each dwell window and rises above the $0.05$ threshold (shaded red) only during the brief inter-pattern transitions, consistent with the basin-of-attraction guarantees of Proposition~\ref{prop:fast-robust}.

\begin{figure}[H]
    \centering
    \includegraphics[width=\linewidth]{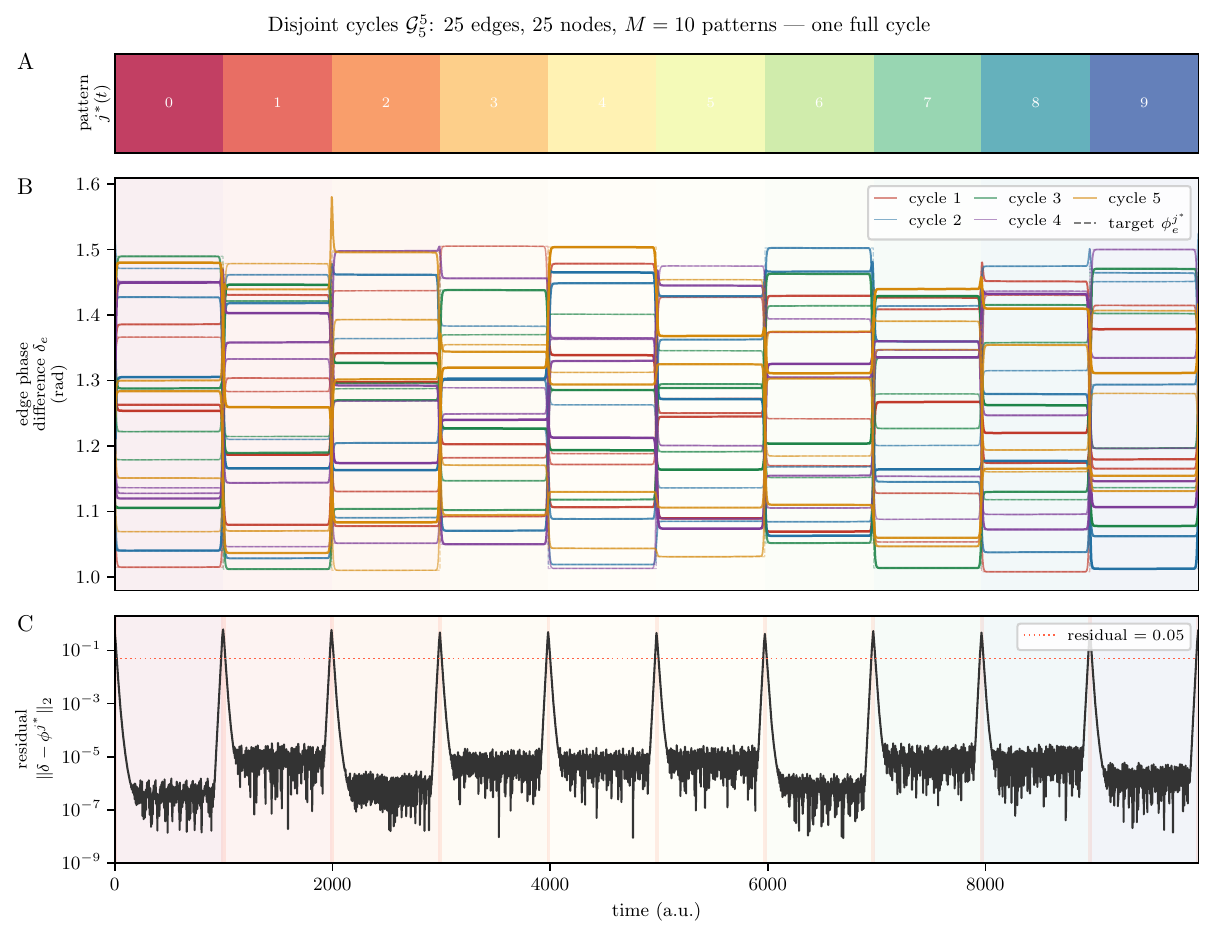}
    \caption{Numerical simulation of the three-timescale system \eqref{eq:joint} on $\mathcal{G}_5^5$ with $M=10$ patterns and cyclic routing, showing one full traversal cycle in steady state (post warm-up). \textbf{(A)} Active pattern $j^*(t)$; each color corresponds to one catalog entry. \textbf{(B)} All 25 edge phase differences $\delta_e(t)$ (solid, colored by cycle) and per-pattern targets $\phi_e^{j^*}$ (dashed); the fast layer locks onto each target during the dwell window. \textbf{(C)} Tracking residual $\|\delta(\theta)-\phi^{j^*}\|_2$ on a log scale; shaded regions mark transitions (residual $>0.05$).}
    \label{fig:honeycomb_sim}
\end{figure}

The system successfully stores and traverses the entire catalog of 10 patterns in the prescribed cyclic order, with each pattern's phase differences closely tracking their targets during dwell windows. The tracking residual remains low during dwells and spikes only during transitions, consistent with the theoretical stability guarantees of Proposition~\ref{prop:fast-robust}.

\subsection{Dwell-time formula}\label{sec:exp-dwell}
We validate Proposition~\ref{prop:dwell} by sweeping $\tau_r$ with $\beta=15$, $\tau=80$ fixed, and comparing the closed-form prediction~\eqref{eq:dwell} against the empirical mean dwell time ($N=6$, $M=5$, cyclic permutation $K$ with $K_{k,k}=0$, $\varepsilon_{\mathrm{tol}}=10^{-2}$). The result is shown in Figure~\ref{fig:dwell_time}.

The figure confirms the linear growth of mean dwell time with $\tau_r$ predicted by \eqref{eq:dwell}. The formula consistently overestimates the empirical values by a roughly constant offset for two reasons. First, \eqref{eq:dwell} is a conservative upper bound: the proof chains the recognition lag and the replicator activation time as strictly sequential worst-case estimates, whereas in practice the two processes overlap. Second, the exit criterion differs between theory and simulation: the formula measures the time until $\alpha_k$ falls to $\varepsilon_{\mathrm{tol}}=0.01$, whereas the empirical dwell ends as soon as $\arg\max_j\alpha_j$ switches---which occurs when a competing component first crosses the current winner at $\approx 0.5$, well before any component reaches $\varepsilon_{\mathrm{tol}}$. This earlier detection accounts for the bulk of the offset. Despite the systematic overestimate, \eqref{eq:dwell} correctly captures the linear dependence on $\tau_r$ and serves as a reliable design guide for tuning dwell time.

\begin{figure}[H]
    \centering
    \includegraphics[width=0.75\linewidth]{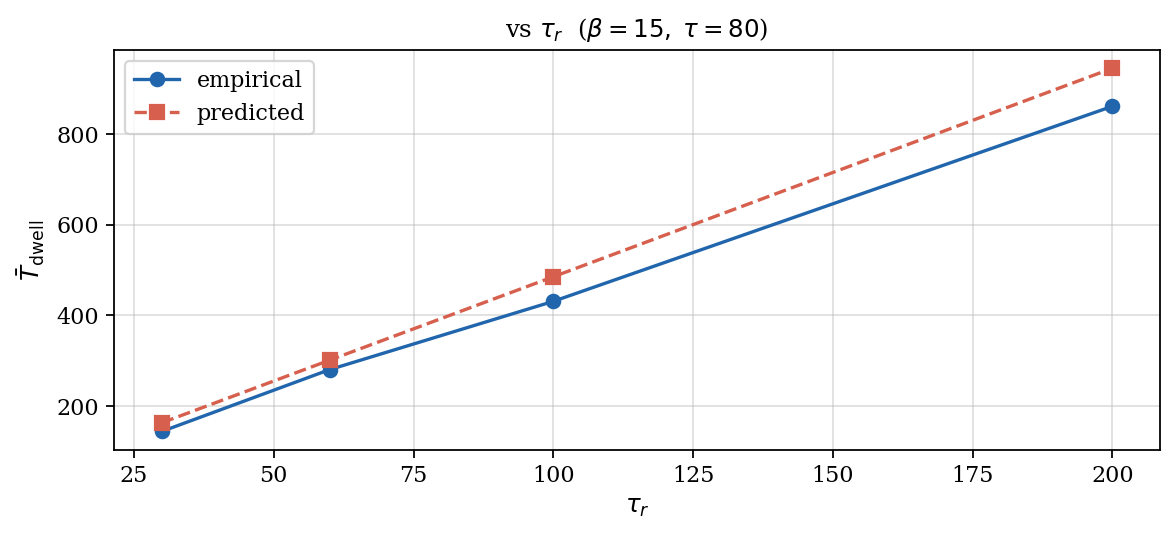}
    \caption{Mean dwell time as a function of $\tau_r$ ($\beta=15$, $\tau=80$ fixed; $N=6$, $M=5$, cyclic routing). Solid line: closed-form upper bound~\eqref{eq:dwell}. Markers: empirical mean over independent trials. The bound correctly predicts the linear scaling with $\tau_r$ but overestimates the empirical values by a roughly constant offset, attributable to two conservative assumptions in the derivation: (i) recognition lag and replicator activation are treated as strictly sequential, whereas they overlap in practice; and (ii) the formula uses the $\varepsilon_{\mathrm{tol}}$-exit criterion, while the simulation detects transitions at the earlier $\arg\max_j\alpha_j$ switch.}
    \label{fig:dwell_time}
\end{figure}

\subsection{Context-dependent memory routing}\label{sec:exp-context}
A defining feature of our architecture is that storage and routing are
carried by different layers: the catalog patterns live in the fast
Kuramoto layer and never change, while the traversal order is not
hard-wired anywhere in the substrate but carried entirely by the
routing matrix $K$, a free parameter of the replicator layer that can
itself be learned and tuned. This separation sets our model apart from
prior sequential-memory networks, in which the order of recall is baked
into the coupling weights, so that visiting the patterns in a different
order requires retraining those weights from scratch. Here, by
contrast, the same stored catalog is replayed in entirely different
orders with no modification to the oscillator weights: it suffices to
replace $K$ with a small learned stack $\{T^{(s)}\}_{s=1}^{S}$ of
column-stochastic matrices and let an external context signal
$u(t)\in\Delta^{S}$ select among them, so that the recall order can be
redirected at any time by the context input. Concretely, we leave the three-timescale dynamics~\eqref{eq:joint} unchanged and replace only the fitness signal driving the replicator,
\begin{equation}
  T(u)\;=\;\sum_{s=1}^{S}u_{s}\,T^{(s)},
  \qquad
  f\;=\;\beta\,T(u)^{\top}r,
  \label{eq:context_T}
\end{equation}
so that $T(u)$ is the convex combination of context-specific routing
matrices selected at runtime by $u(t)$. Each $T^{(s)}$ is acquired
online by a context-gated Hebbian rule
\begin{equation}
  \dot{T}^{(s)}_{ji}
  \;=\;
  \eta\,u_{s}(t)\,g(V(\theta))\,\alpha_{j}\,r_{i}
  \;-\;
  \lambda\,T^{(s)}_{ji},
  \label{eq:hebb_c}
\end{equation}
in which the convergence gate $g$ used in the replicator layer
restricts plasticity to synchronized plateaus and the context factor
$u_{s}(t)$ ensures that only the matrix corresponding to the active
context is updated; after training, each $T^{(s)}$ converges to the
per-context maximum-likelihood transition matrix over the demonstrated
sequences.

We demonstrate this capability on a stylized ``weekday / weekend''
scenario in which $M=4$ stored patterns $\{0,1,2,3\}$---think of them as
the four steps of a daily routine---are traversed under two different
contexts: $u=e_0$ routes the network through
$0\!\to\!1\!\to\!2\!\to\!3\!\to\!0\!\to\!\cdots$ while $u=e_1$ routes
it through $0\!\to\!2\!\to\!3\!\to\!1\!\to\!0\!\to\!\cdots$, a
different schedule built from the exact same four steps. The
simulation uses the same disjoint-cycle substrate $\mathcal{G}_5^5$ as
\S\ref{sec:exp-honeycomb} ($n=N=25$) with $\gamma=50$, $\beta=15$,
$\delta_V=10^{-2}$, $\tau_r=200$, $\tau=80$, $\tau_\theta=0.1$, and
two soft routing matrices $K^{(0)}$ (weekday) and $K^{(1)}$ (weekend)
obtained by sampling each row from a Dirichlet distribution---the
intended successor receives concentration $\alpha_{\mathrm{succ}}=8$
and the remaining $M-2$ off-diagonal patterns each receive
$\alpha_{\mathrm{other}}=1$, with the diagonal fixed at zero. The
resulting matrices (Figure~\ref{fig:context_routing}, right column)
have dominant entries in the range $0.59$--$0.95$ for intended
successors and small varied entries in the range $0.01$--$0.40$ for
non-successors, reflecting the stochastic output of the Hebbian
rule~\eqref{eq:hebb_c} at convergence and ensuring that routing
remains reliable even when the fitness advantage is not absolute.

The three panels of Figure~\ref{fig:context_routing} share an identical
physical network---the oscillator weights $w^{(j)}$, catalog $\Phi$,
and all timescales are fixed throughout; only the routing matrix $K$
varies. In Panel~(A), $K=K^{(0)}$ drives the network through the
weekday cycle $0\!\to\!1\!\to\!2\!\to\!3\!\to\!0\!\to\!\cdots$ with
mean dwell $\approx 460$ time units and dwell lengths that vary
moderately across transitions, because the dominant entries of
$K^{(0)}$ range from $0.59$ to $0.90$ across rows and patterns with a
stronger routing signal exit slightly faster; the attention weights
$\alpha_j(t)$ confirm that each pattern reaches near-unity attention
during its dwell window with sharp inter-pattern transitions. In
Panel~(B), the same physical network now autonomously visits the
weekend cycle $0\!\to\!2\!\to\!3\!\to\!1\!\to\!0\!\to\!\cdots$ under
$K=K^{(1)}$, with a slightly longer mean dwell of $\approx 555$ time
units because the dominant entries of $K^{(1)}$ are on average higher
($0.66$--$0.95$) and shorten the escape time at each step. Panels (A)
and (B) together confirm the prediction of~\eqref{eq:dwell} that the
dwell duration depends jointly on $\tau_r$ and the magnitude of the
routing entries. Panel~(C) then demonstrates a runtime switch
$K^{(0)}\!\to\!K^{(1)}$ at $t=3200$: mechanically, only the fitness
signal $f=\beta K^\top r$ changes, while the Kuramoto phases
$\theta(t)$, recognition vector $r(t)$, and attention weights
$\alpha(t)$ remain \emph{continuous} across the switch, so the new
routing takes effect as soon as $r$ next peaks at a pattern whose
successor differs between the two matrices---in this run, pattern~$3$,
whose weekday successor is $0$ but whose weekend successor is $1$.
From that transition onward the trajectory follows
$3\!\to\!1\!\to\!0\!\to\!2\!\to\!\cdots$ (weekend order), with no
transient or warm-up period after the switch. Reconfiguring the
traversal order without modifying any stored pattern or oscillator
weight is impossible in any architecture that encodes the order in the
coupling weights themselves, and confirms that the routing layer here
is fully decoupled from the storage layer and behaves as a
runtime-switchable program selector.

\begin{figure}[H]
  \centering
  \includegraphics[width=\linewidth]{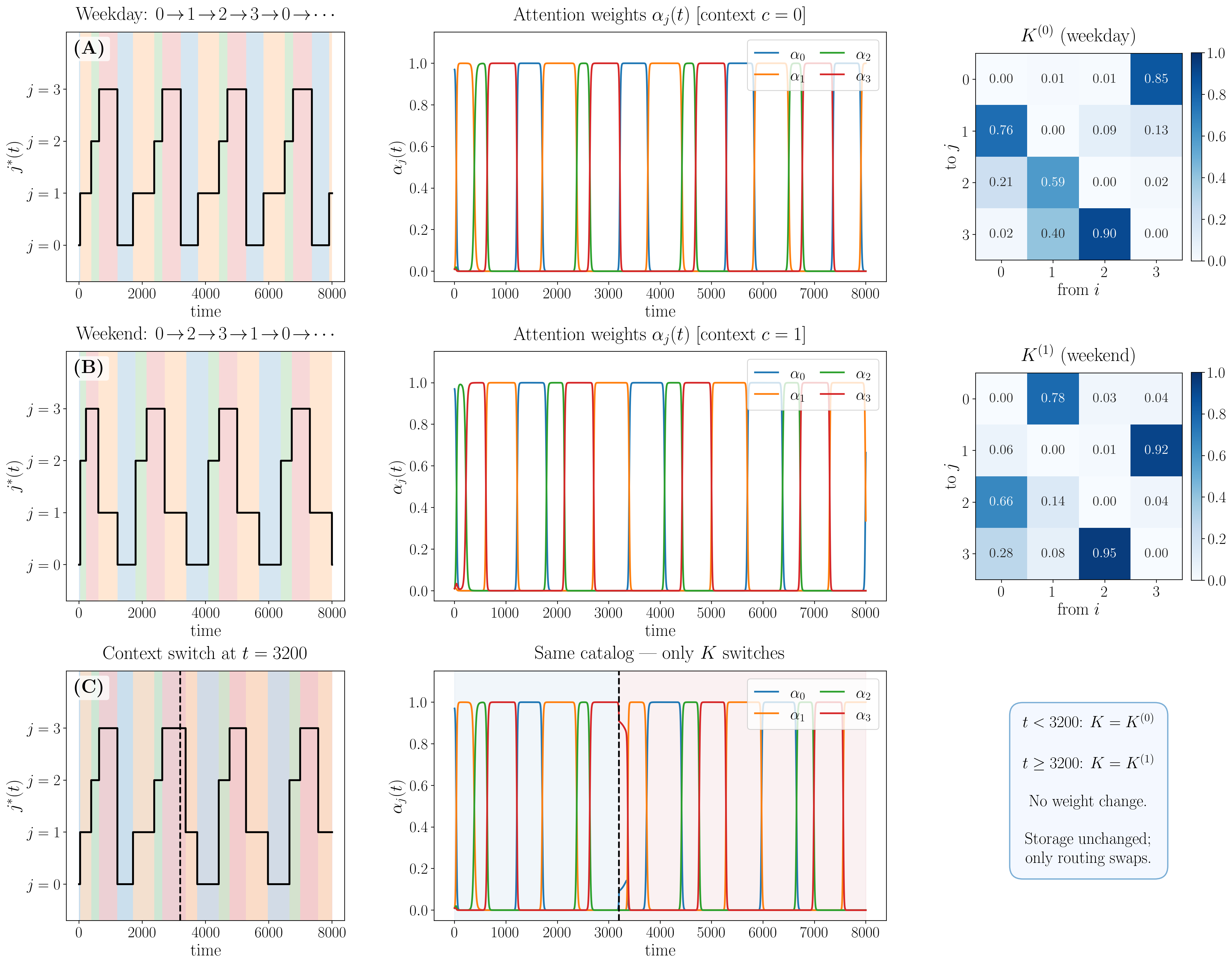}
  \caption{Numerical simulation of context-dependent routing
  ($M=4$ patterns on the disjoint-cycle substrate $\mathcal{G}_5^5$,
  $n=N=25$; same parameters as Figure~\ref{fig:honeycomb_sim} except
  $\tau_r=200$).
  \textbf{Left}: selected pattern $j^*(t)=\arg\max_j\alpha_j(t)$ as a
  step function; background shading identifies the active pattern by
  color. \textbf{Center}: attention weights $\alpha_j(t)$.
  \textbf{Right}: soft routing matrices $K^{(s)}$ (Dirichlet-sampled,
  zero diagonal); dominant entries $0.59$--$0.95$ encode each context's
  successor graph with naturally varied non-successor entries.
  \textbf{(A)} $K=K^{(0)}$ (weekday): autonomous cycle
  $0\!\to\!1\!\to\!2\!\to\!3\!\to\!0\!\to\!\cdots$, mean dwell $\approx 460$.
  \textbf{(B)} $K=K^{(1)}$ (weekend): same catalog, same oscillator
  weights, different cycle
  $0\!\to\!2\!\to\!3\!\to\!1\!\to\!0\!\to\!\cdots$, mean dwell $\approx 555$.
  \textbf{(C)} Mid-run switch $K^{(0)}\!\to\!K^{(1)}$ at $t=3200$
  (dashed line): blue region follows the weekday order; after the switch
  the system immediately redirects --- from pattern~$3$ it goes to $1$
  (weekend: $3\!\to\!1$) instead of $0$ (weekday: $3\!\to\!0$), then
  continues $1\!\to\!0\!\to\!2\!\to\!\cdots$ following $K^{(1)}$, with
  no change to any stored pattern or oscillator weight.}
  \label{fig:context_routing}
\end{figure}

The observed dwell times agree quantitatively with
formula~\eqref{eq:dwell} evaluated at the per-row dominant entry
$p_i = \max_j K^{(s)}_{ij}$: at the empirical exit threshold (when
$\alpha_{j^*}$ drops below $0.5$) one has $T_{\mathrm{dwell}} \approx
\tau_r \log(\beta\,p_i / 0.5)$, and with $\tau_r=200$, $\beta=15$, and
$p_i$ ranging from $0.59$ to $0.95$ across rows the formula predicts
per-transition dwells between $\approx 355$ and $\approx 730$ time
units, bracketing the observed range and accounting for the natural
spread of dwell widths in the figure. The within-context variation is
therefore a direct signature of the row-by-row spread of dominant
entries in $K^{(s)}$ rather than an artefact of the dynamics, and the
mid-run switch in Panel~(C) produces dwell times that return
immediately to the same distribution (mean $\approx 490$, sd
$\approx 170$) with no warm-up transient. Although the experiment
uses two hard context signals, the framework accommodates any
$u(t)\in\Delta^{S}$: a mixed signal $u=(u_0,u_1)$ with both components
nonzero produces an effective routing matrix
$T(u)=u_0 K^{(0)}+u_1 K^{(1)}$ that interpolates smoothly between the
two traversal graphs, and more broadly any column-stochastic $K$ can
be plugged in---a doubly-stochastic $K$ producing diffuse routing
among several successors, a block-diagonal $K$ partitioning the
catalog into independent sub-sequences, and a single-column $K$
driving the network to a single absorbing pattern. The architecture
therefore supports a continuous spectrum of routing
behaviours---rigid cyclic traversal, branching, free diffusion, and
absorption---controlled entirely by the shape of $K$, with storage
unchanged.

\section{Discussion}
 
We have presented an autonomous, continuous-time architecture in which
what is stored and the order in which it is recalled are handled by
separate mechanisms. A fast oscillator layer stores a catalog of
patterns as stable phase-locked states; slower layers read out which
pattern is currently active and route the system from one pattern to
the next under a programmable schedule. Because the traversal order is
carried by this routing program rather than by the storage weights, a
single network can realize qualitatively different itineraries ---
cyclic, branching, or terminating --- without relearning the patterns
themselves. We analyzed the three layers separately and showed that,
composed, they produce orderly transitions with predictable timing, in
agreement with our experiments.

Several broader questions follow. The oscillator layer has a large
storage capacity, but how closely it can be approached in practice
remains open. We have taken the catalog and its routing as given;
learning both from data, while retaining the guarantees on stability
and timing, is a natural next step. More broadly, letting the routing
program itself be selected --- or blended --- by an external context
allows one network to run many different sequences over a shared
memory, chosen at runtime. Understanding how such context-dependent
sequencing scales, how it behaves under mixed or ambiguous contexts,
and whether the context can be inferred from the network's own activity
rather than supplied externally are all promising directions. We expect
the underlying principle --- separating what is stored from the order
in which it is recalled, and making that order a reprogrammable object
in its own right --- to extend beyond the specific oscillator model
studied here.
\section{Acknowledgement}
This material is based upon work supported by the Army Research Laboratory under grant number W911NF-24-10228.

\bibliographystyle{unsrtnat}
\bibliography{references}

@article{Karuvally2025,
  title={Exponential dynamic energy network for high capacity sequence memory},
  author={A. Karuvally and P. Lertsaroj and T. Sejnowski and H. Siegelmann},
  journal={Advances in Neural Information Processing Systems},
  volume={38},
  pages={39306--39333},
  year={2026}
}

@article{Betteti2026,
  author  = {S. Betteti and G. Baggio and S. Zampieri},
  title   = {{A Dynamical Theory of Sequential Retrieval in Input-Driven Hopfield Networks}},
  journal = {arXiv preprint arXiv:2603.03201},
  year    = {2026},
}

@article{Murray2020,
  author  = {J. M. Murray and G. S. Escola},
  title   = {{Remembrance of things practiced with fast and slow learning in cortical and subcortical pathways}},
  journal = {Nature Communications},
  volume  = {11},
  pages   = {6441},
  year    = {2020},
}

@inproceedings{Chaudhry2023,
  author    = {H. T. Chaudhry and J. A. Zavatone-Veth and D. Krotov and C. Pehlevan},
  title     = {{Long Sequence Hopfield Memory}},
  booktitle = {Advances in Neural Information Processing Systems (NeurIPS)},
  year      = {2023},
}

@inproceedings{Krotov2016,
  author    = {D. Krotov and J. J. Hopfield},
  title     = {{Dense Associative Memory for Pattern Recognition}},
  booktitle = {Advances in Neural Information Processing Systems (NeurIPS)},
  year      = {2016},
}

@article{Demircigil2017,
  author  = {M. Demircigil and J. Heusel and M. L{\"o}we and S. Upgang and F. Vermet},
  title   = {{On a model of associative memory with huge storage capacity}},
  journal = {Journal of Statistical Physics},
  volume  = {168},
  number  = {2},
  pages   = {288--299},
  year    = {2017},
}

@inproceedings{Ramsauer2021,
  author    = {H. Ramsauer and others},
  title     = {{Hopfield Networks is All You Need}},
  booktitle = {International Conference on Learning Representations (ICLR)},
  year      = {2021},
}

@article{Sompolinsky1986,
  author  = {H. Sompolinsky and I. Kanter},
  title   = {{Temporal association in asymmetric neural networks}},
  journal = {Physical Review Letters},
  volume  = {57},
  number  = {22},
  pages   = {2861},
  year    = {1986},
}

@article{Kleinfeld1986,
  author  = {D. Kleinfeld},
  title   = {{Sequential state generation by model neural networks}},
  journal = {Proceedings of the National Academy of Sciences},
  volume  = {83},
  number  = {24},
  pages   = {9469--9473},
  year    = {1986},
}

@article{Rabinovich2008,
  author  = {M. I. Rabinovich and R. Huerta and P. Varona and V. S. Afraimovich},
  title   = {{Transient cognitive dynamics, metastability, and decision making}},
  journal = {PLoS Computational Biology},
  volume  = {4},
  number  = {5},
  pages   = {e1000072},
  year    = {2008},
}

@article{Krupa1995,
  author  = {M. Krupa and I. Melbourne},
  title   = {{Asymptotic stability of heteroclinic cycles in systems with symmetry}},
  journal = {Ergodic Theory and Dynamical Systems},
  volume  = {15},
  number  = {1},
  pages   = {121--147},
  year    = {1995},
}

@book{Field1996,
  author    = {M. J. Field},
  title     = {{Lectures on Bifurcations, Dynamics and Symmetry}},
  series    = {Pitman Research Notes in Mathematics 356},
  publisher = {Chapman and Hall/CRC},
  year      = {2020},
}

@article{Ashwin2013,
  author  = {P. Ashwin and C. Postlethwaite},
  title   = {{On designing heteroclinic networks from graphs}},
  journal = {Physica D: Nonlinear Phenomena},
  volume  = {265},
  pages   = {26--39},
  year    = {2013},
}

@article{Russo2012,
  author  = {E. Russo and A. Treves},
  title   = {{Cortical free-association dynamics: Distinct phases of a latching network}},
  journal = {Physical Review E},
  volume  = {85},
  pages   = {051920},
  year    = {2012},
}

@article{Recanatesi2015,
  author  = {S. Recanatesi and M. Katkov and S. Romani and M. Tsodyks},
  title   = {{Neural network model of memory retrieval}},
  journal = {Frontiers in Computational Neuroscience},
  volume  = {9},
  pages   = {149},
  year    = {2015},
}

@article{Dorfler2014,
  author  = {F. D{\"o}rfler and F. Bullo},
  title   = {{Synchronization in complex networks of phase oscillators: A survey}},
  journal = {Automatica},
  volume  = {50},
  number  = {6},
  pages   = {1539--1564},
  year    = {2014},
}

@article{Eichenbaum2014,
  author  = {H. Eichenbaum},
  title   = {{Time cells in the hippocampus: a new dimension for mapping memories}},
  journal = {Nature Reviews Neuroscience},
  volume  = {15},
  number  = {11},
  pages   = {732--744},
  year    = {2014},
}

@article{MacDonald2011,
  author  = {C. J. MacDonald and K. Q. Lepage and U. T. Eden and H. Eichenbaum},
  title   = {{Hippocampal ``time cells'' bridge the gap in memory for discontiguous events}},
  journal = {Neuron},
  volume  = {71},
  number  = {4},
  pages   = {737--749},
  year    = {2011},
}

@article{Ogranovich2026,
  author  = {A. Ogranovich and T. Guo and A. R. Venkatakrishnan and M. Shapiro and F. Bullo and F. Pasqualetti},
  title   = {{Oscillator-Based Associative Memory with Exponential Capacity: Theory, Algorithms, and Hardware Implementation}},
  journal = {arXiv preprint arXiv:2604.01469},
  year    = {2026},
}
\appendix

\section{Proof of Theorem~\ref{thm:rank1}}\label{app:proof-rank1}

\subsection{Edge coordinates and gradient-flow structure}

Fix an orientation of the edges, let $B\in\mathbb{R}^{n\times N}$ be the signed incidence matrix, and collect the edge weights into $w=(w_1,\dots,w_N)^\top\in\mathbb{R}_{>0}^N$, where $w_e=w_{ij}$ on edge $e=\{i,j\}$. The natural coordinates are the edge phase differences $\delta=B^\top\theta\in\mathbb{R}^N$, with $\delta_e=\theta_j-\theta_i$ on edge $e=\{i,j\}$; this running variable is distinct from the stored targets $\phi^{(j)}$. In these coordinates \eqref{eq:kuramoto_identical} is the gradient flow
\begin{equation}\label{eq:gradient-flow}
  \dot\theta \;=\; -B\,\mathrm{diag}(w)\sin\delta \;=\; -\nabla E(\theta),
  \qquad
  E(\theta) \;=\; -\sum_{e=1}^{N} w_e\cos\delta_e,
\end{equation}
so every trajectory descends $E$ and converges to an equilibrium (the energy is analytic, so convergence to a single equilibrium follows from the {\L}ojasiewicz inequality). The equilibrium condition $\dot\theta=0$ reads
\begin{equation}\label{eq:equilibrium-kernel}
  B\,\mathrm{diag}(w)\sin\delta=0
  \quad\Longleftrightarrow\quad
  \mathrm{diag}(w)\sin\delta\in\ker(B),
\end{equation}
so the equilibria are governed entirely by $\ker(B)$, the \emph{cycle space} of the graph.

Because $E$ depends on $\theta$ only through $\delta=B^\top\theta$, the flow is invariant under a uniform phase shift on each connected component; equilibria therefore come in families, and we identify an equilibrium $\theta^*$ with its edge vector $\delta^*=B^\top\theta^*$.

\begin{definition}[Exponentially stable phase-locked configuration]\label{def:stable-plc}
An equilibrium $\theta^*$ of \eqref{eq:kuramoto_identical} is an \emph{exponentially stable phase-locked configuration} if its Jacobian
$J(\theta^*)=-B\,\mathrm{diag}\bigl(w_e\cos\delta^*_e\bigr)B^\top$
is negative definite on $\ker(B^\top)^\perp$. In that case every trajectory starting near $\theta^*$ converges exponentially to a per-component phase shift of $\theta^*$; in particular $\delta(t)\to\delta^*$ exponentially.
\end{definition}

\begin{lemma}[Stable cube]\label{lem:stable-cube}
Every equilibrium with $\delta^*\in(-\tfrac\pi2,\tfrac\pi2)^N$ is an exponentially stable phase-locked configuration. We call $(-\tfrac\pi2,\tfrac\pi2)^N$ the \emph{stable cube}.
\end{lemma}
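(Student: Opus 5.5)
The plan is to compute the Jacobian at the equilibrium and show it is a weighted graph Laplacian with strictly positive edge weights. Then negative definiteness on $\ker(B^\top)^\perp$ follows from the standard kernel characterization of the Laplacian.

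First, differentiate the vector field $f(\theta)=-B\,\mathrm{diag}(w)\sin(B^\top\theta)$ from \eqref{eq:gradient-flow}. The chain rule gives
\[
J(\theta^*)=-B\,\mathrm{diag}\bigl(w_e\cos\delta^*_e\bigr)B^\top .
\]
This is the matrix in Definition~\ref{def:stable-plc}, and it is the negative Hessian of $E$.

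Second, use the hypothesis $\delta^*\in(-\tfrac\pi2,\tfrac\pi2)^N$. It gives $\cos\delta^*_e>0$ for every edge, and since $w_e>0$ the diagonal matrix $D:=\mathrm{diag}(w_e\cos\delta^*_e)$ is positive definite. For any $x\in\mathbb{R}^n$ we then have
\[
x^\top J(\theta^*)x=-\|D^{1/2}B^\top x\|^2\le 0,
\]
with equality if and only if $B^\top x=0$. So $J(\theta^*)$ is symmetric negative semidefinite with kernel exactly $\ker(B^\top)$. This kernel is spanned by the indicator vectors of the connected components, which are the per-component rotations. Because $J$ is symmetric, it maps $\ker(B^\top)^\perp$ into itself and is negative definite there. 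Its largest eigenvalue on that subspace is $-\lambda_{\min}^{+}$, the smallest nonzero eigenvalue of the cosine-weighted Laplacian, which is strictly negative.

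Finally, I would justify the dynamical part of the definition: trajectories starting near $\theta^*$ converge exponentially to a phase-shifted copy of $\theta^*$. The flow preserves the component-wise sums of $\theta$, since $\mathbf 1_C^\top B=0$ for each component $C$. I would therefore restrict to the affine slice through $\theta^*$ parallel to $\ker(B^\top)^\perp$. On that slice the equilibrium is hyperbolic with all eigenvalues negative, so the linearization theorem gives local exponential convergence. Because $\delta=B^\top\theta$ is unchanged by component shifts, $\delta(t)\to\delta^*$ exponentially.

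The argument is essentially routine. The only point needing care is the last step: making precise that negative definiteness on the transverse subspace, rather than on all of $\mathbb{R}^n$, gives exponential stability modulo the rotational symmetry. The invariance of the component sums handles this cleanly.
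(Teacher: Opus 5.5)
Your proposal is correct and follows the paper's proof: $-J(\theta^*)=B\,\mathrm{diag}(w_e\cos\delta^*_e)B^\top$ is a graph Laplacian with strictly positive weights, so it is positive semidefinite with kernel exactly $\ker(B^\top)$, which makes $J(\theta^*)$ negative definite on $\ker(B^\top)^\perp$. Your last step, which derives exponential convergence modulo per-component phase shifts from the conservation of component sums, goes slightly beyond the paper; the paper simply builds that consequence into Definition~\ref{def:stable-plc} rather than proving it.
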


\begin{proof}
$-J(\theta^*)=B\,\mathrm{diag}(w_e\cos\delta^*_e)B^\top$ is the Laplacian of the graph with edge weights $w_e\cos\delta^*_e>0$. A graph Laplacian with positive weights is positive semidefinite with kernel $\ker(B^\top)$, the span of the per-component constant vectors; hence $J(\theta^*)$ is negative definite on $\ker(B^\top)^\perp$.
\end{proof}

\subsection{Reduction to a single cycle}

On the substrate $\mathcal{G}_m^{n_c}=\bigsqcup_{a=1}^{m}C_a$ the incidence matrix is block diagonal, $B=\mathrm{diag}(B_1,\dots,B_m)$, so the equilibrium condition \eqref{eq:equilibrium-kernel} decouples into $m$ independent single-cycle problems; it therefore suffices to analyze one cycle $C_a$. Label its nodes $1,\dots,n_c$ around the loop and orient edge $e$ from node $e$ to node $e+1$ (indices modulo $n_c$), so that $\delta_{e,a}=\theta_{e+1}-\theta_e$. With this orientation $\ker(B_a)=\mathrm{span}(\mathbf 1)$, and \eqref{eq:equilibrium-kernel} collapses to a single scalar relation,
\begin{equation}\label{eq:equilibrium-incidence}
  w_{e,a}\,\sin\delta_{e,a} \;=\; c_a \qquad\text{for all } e=1,\dots,n_c,
\end{equation}
with the same constant $c_a$ on every edge of the cycle. In addition, since the $\delta_{e,a}$ are phase differences on the circle, any continuous representative carries an integer winding around the loop,
\begin{equation}\label{eq:winding}
  \sum_{e=1}^{n_c}\delta_{e,a} \;=\; 2\pi k_a, \qquad k_a\in\mathbb{Z}.
\end{equation}
Two observations will be used below. First, the left-hand side of \eqref{eq:winding} is continuous in time while the right-hand side is quantized, so the winding number is \emph{conserved} by the flow. Second, in the stable cube $|\sum_e\delta_{e,a}|<n_c\pi/2$, so a winding-$k_a$ configuration there requires $n_c>4|k_a|$. By Lemma~\ref{lem:stable-cube}, the exponentially stable phase-locked configurations with winding $k_a$ are exactly the solutions of \eqref{eq:equilibrium-incidence}--\eqref{eq:winding} in the stable cube.

\subsection{Proof of Theorem~\ref{thm:rank1}}

\begin{proof}
By the block decoupling above, it suffices to prove the claim on a single cycle $C_a$. By Assumption~\ref{assumption:sign-uniform}(i) all entries $\phi^{(j)}_{e,a}$ on cycle $a$ share a common sign $s_a\in\{\pm1\}$, and by Assumption~\ref{assumption:sign-uniform}(ii) this sign satisfies $s_a=\operatorname{sign}(k_a)$ with $k_a\ne0$.

\emph{Existence.} Because every entry of cycle $a$ has sign $s_a$, we have $\sin\phi^{(j)}_{e,a}=s_a\,|\sin\phi^{(j)}_{e,a}|$, so the weights \eqref{eq:winverse} give $w^{(j)}_{e,a}\sin\phi^{(j)}_{e,a}=s_a$ on every edge. Hence $\delta=\phi^{(j)}$ satisfies \eqref{eq:equilibrium-incidence} with $c_a=s_a$ and, by the closing-edge convention \eqref{eq:closing-edge}, has winding number $k_a$. Since $\phi^{(j)}$ lies in the stable cube by Assumption~\ref{assumption:sign-uniform}(i), Lemma~\ref{lem:stable-cube} makes it an exponentially stable phase-locked configuration.

\emph{Uniqueness.} Let $\delta\in(-\tfrac\pi2,\tfrac\pi2)^{n_c}$ be any equilibrium on cycle $a$ with winding number $k_a$. By \eqref{eq:equilibrium-incidence} there is a scalar $c$ with $w^{(j)}_{e,a}\sin\delta_{e,a}=c$ for all $e$, i.e.\ $\sin\delta_{e,a}=c\,|\sin\phi^{(j)}_{e,a}|$; solvability with $\delta_{e,a}$ strictly inside $(-\tfrac\pi2,\tfrac\pi2)$ forces $|c|<\bar c:=1/\max_e|\sin\phi^{(j)}_{e,a}|$, and on this range $\delta_{e,a}=\arcsin\!\bigl(c\,|\sin\phi^{(j)}_{e,a}|\bigr)$. The winding constraint \eqref{eq:winding} therefore reads
\[
  W(c):=\sum_{e=1}^{n_c}\arcsin\!\bigl(c\,|\sin\phi^{(j)}_{e,a}|\bigr)=2\pi k_a.
\]
Each term of $W$ is strictly increasing in $c$ because $|\sin\phi^{(j)}_{e,a}|>0$, so $W$ is strictly increasing on $(-\bar c,\bar c)$ and $W(c)=2\pi k_a$ has at most one solution. Since $\bar c>1$ and $W(s_a)=\sum_e\arcsin\bigl(\sin\phi^{(j)}_{e,a}\bigr)=\sum_e\phi^{(j)}_{e,a}=2\pi k_a$, that solution is $c=s_a$, giving $\delta_{e,a}=\phi^{(j)}_{e,a}$ on every edge. Applying this on each cycle $a=1,\dots,m$ yields $\delta=\phi^{(j)}$, completing the proof.
\end{proof}

\section{Code availability}\label{app:code}
Code, models, and demo will be released at a public GitHub URL upon acceptance.

\end{document}